\documentclass[a4paper,USenglish,cleveref,autoref,thm-restate]{lipics-v2021}

\hideLIPIcs  

\usepackage[camera]{dtrt}
\usepackage{comment}
\usepackage{cite}
\usepackage{xurl}
\usepackage{booktabs} 
\usepackage{float}
\usepackage{caption}
\usepackage{makecell}
\usepackage{cleveref}
\usepackage{siunitx}
\crefname{appendix}{appendix}{appendices}
\Crefname{appendix}{Appendix}{Appendices}

\usepackage{algorithm2e}
\usepackage{algorithmic}

\title{To Wait or To Probe: Arbitrage Competition on High-Throughput Blockchains}

\titlerunning{To Wait or To Probe: Arbitrage Competition on High-Throughput Blockchains}  

\author{Fei Wu\textsuperscript{*}}{King's College London, United Kingdom \and \url{https://www.mikuwill.me/} }{fei.wu@kcl.ac.uk}{https://orcid.org/0009-0004-5717-0219}{}

\author{Burak {\"O}z}{Flashbots, Germany}{burak@flashbots.net}{https://orcid.org/0009-0003-7508-7112}{}

\authorrunning{Fei Wu and Burak {\"O}z} 

\Copyright{Fei Wu and Burak {\"O}z} 

\ccsdesc[500]{Applied computing~Electronic commerce}

\keywords{Maximal Extractable Value, Cyclic Arbitrage, Spam, Base} 

\supplementdetails[linktext={\url{https://github.com/M1kuW1ll/base_arbitrage_competition}}]{Source Code}{https://github.com/M1kuW1ll/base_arbitrage_competition} 

\nolinenumbers 

\EventEditors{John Q. Open and Joan R. Access}
\EventNoEds{2}
\EventLongTitle{42nd Conference on Very Important Topics (CVIT 2016)}
\EventShortTitle{CVIT 2016}
\EventAcronym{CVIT}
\EventYear{2016}
\EventDate{December 24--27, 2016}
\EventLocation{Little Whinging, United Kingdom}
\EventLogo{}
\SeriesVolume{42}
\ArticleNo{23}

\newcommand{\added}[1]{\textcolor{blue}{#1}}
\newcommand{\deleted}[1]{\textcolor{red}{\sout{#1}}}

\renewcommand{\added}[1]{#1}
\renewcommand{\deleted}[1]{}

\begin{document}

\maketitle

\let\oldthefootnote\thefootnote
\let\thefootnote\relax
\footnotetext{\textsuperscript{*}Fei Wu performed work in part during an internship at Flashbots in Spring 2026.}
\let\thefootnote\oldthefootnote

\begin{abstract}
Maximal Extractable Value (MEV) on high-throughput blockchains can be captured through \emph{targeted search}, where bots identify opportunities off-chain and submit route-committed transactions, or through \emph{probabilistic search}, where bots submit repeated attempts that discover opportunities during on-chain execution. 
This distinction has direct implications for spam, blockspace consumption, and protocol revenue. We model how ordering granularity, fee floors, and opportunity-access shocks shape competition between these architectures.
Using cyclic arbitrage data on Base from June 2025 to February 2026, we develop a trace-level classifier for search architectures and show that the resulting labels correspond to distinct execution behavior. In our sample, probabilistic search accounts for only 23\% of arbitrage activity but produces 95\% of spam and consumes 20\% of Base gas. We test the model across three episodes: Flashblocks selects against broad on-chain probabilistic scanners; token-launch opportunity shocks temporarily revive probabilistic search; and higher fee floors select against probabilistic bots whose opportunity flow cannot sustain repeated attempts. After Base's configuration changes, protocol revenue shifts toward successful arbitrages and away from spam, probabilistic bots pay higher priority fees, and spam consumes a smaller share of blockspace.
\end{abstract}

\section{Introduction}
\label{sec:intro}
A book collector offers \$500 for a first-edition copy of \emph{The Hobbit}. A local bookstore will receive the book at an unknown time and sell it for \$400, with customers allowed to add a gratuity for priority service. The book becomes purchasable as soon as it enters the bookstore's inventory, but the public website updates only every ten minutes. Alice waits for the public update, then submits a purchase offer with a high gratuity. Bob repeatedly queries live inventory and buys if one check reveals the book first, imposing load on the store's system. Who wins the arbitrage? What would the bookstore prefer: targeted purchase attempts with higher gratuities, or repeated checks that might reveal opportunities earlier but consume system resources? Would the answer change if updates became more frequent or repeated checks became more costly?

This toy example captures a central design problem for Maximal Extractable Value (MEV) search on high-throughput blockchains: should searchers \emph{wait} for information and submit targeted execution attempts, or \emph{probe} the system with transactions whose profitability is resolved during execution? On Ethereum, MEV strategies are typically associated with targeted search: searchers identify and simulate an opportunity \emph{off-chain}, then submit a transaction committed to a specific execution route~\cite{flashboys,highfreq,qin2022quantifying,defiliquidations,nonatomic,wu2025measuringcexdex}. On high-throughput blockchains, fast block production, low fees, and limited visibility into pending transactions can make a second architecture viable: searchers submit many transactions that read live state and resolve opportunity discovery \emph{on-chain}, proceeding to further execution only when a profitable route is found~\cite{ozan_lioba_paper, fbspampost, wang2026blockspacepressureanalysisspam, mazorra2026timinggamesprobabilisticbackrunning,chen2023mevonsolana}.

We refer to these two architectures as \emph{targeted search} and \emph{probabilistic search}. Because opportunity discovery occurs at different stages, the two architectures differ in their sensitivity to ordering rules and fees, and therefore in their attempt intensity, blockspace consumption, and contribution to protocol revenue. Probabilistic searchers may submit many attempts per opportunity, most of which fail without producing useful state changes, a category referred to as \emph{spam}. These attempts consume blockspace and pay fees regardless of whether they succeed, so they affect both congestion and protocol revenue. Targeted searchers, by contrast, are expected to submit one transaction per opportunity and achieve higher success rates, while paying higher fees per attempt; however, they can also fail when an opportunity disappears before execution. The design question is therefore which architecture a protocol encourages, and how the resulting bot composition shapes blockspace consumption and protocol revenue.

This question has become increasingly relevant as high-throughput chains experiment with ordering rules and fee designs to reduce spam externalities and improve protocol revenue~\cite{base2026flashblocks, optimism2026stakebasedpriority, arbitrum2026timeboostintro, base2026minimumfeeincrease}. However, the equilibrium effect of such changes is not obvious. Existing work characterizes probabilistic search timing ~\cite{mazorra2026timinggamesprobabilisticbackrunning} and models equilibrium spam volume as a function of block capacity, gas fee floor, and fee-based ordering~\cite{wang2026blockspacepressureanalysisspam}, but typically focuses on a single search architecture. In this paper, we formalize the distinction between targeted and probabilistic search and study how protocol design shapes competition between them.


We develop a reduced-form equilibrium model of architecture competition in which ordering granularity, fee floors, and opportunity access shape the relative viability of targeted and probabilistic search (cf. \Cref{sec:model}). The model highlights two margins through which protocol changes affect spam: which architectures remain active, and how intensively active bots submit attempts.

We test these predictions on Base, which has documented spam activity \cite{fbspampost, ozan_lioba_paper, wang2026blockspacepressureanalysisspam} and underwent major configuration changes during our June 2025 to February 2026 sample period \cite{baseconfigchange}. We identify cyclic arbitrages and associated spam, and develop a trace-level classifier that distinguishes search architectures by the extent to which route discovery is observable on-chain before swap execution in successful arbitrages (cf. \Cref{sec:methodology}).

Our empirical analysis proceeds in three steps. We first validate that the classifier separates economically meaningful search architectures by showing that they differ in execution behavior, attempt efficiency, and spam production (cf. \Cref{sec:architecture_character}). We next study how arbitrage competition and the architecture relative viability evolve around three episodes: Flashblocks, the September surge, and the minimum-base-fee escalation (cf. \Cref{sec:flashblocks}-\ref{sec:fee_escalation}). Finally, we aggregate bot-level responses to chain-level outcomes, including priority-fee revenue and blockspace consumption (cf. \Cref{sec:chain_efficiency}).
We summarize our contributions as follows:
\begin{enumerate}
\item We formalize the distinction between \emph{targeted} and \emph{probabilistic} search architectures and develop a reduced-form equilibrium model in which ordering granularity, fee floors, and opportunity access shape the relative viability of the two architectures.
\item We develop a trace-level classifier that distinguishes the architectures through observable route discovery within successful arbitrages, and validate that the labels correspond to coherent transaction designs, spam behavior, and efficiency profiles.
\item We test the model's comparative statics on Base. Flashblocks selects against scan-intensive probabilistic bots and toward leaner survivors; a temporary opportunity shock revives probabilistic activity through short-lived entrants; and fee floor escalation selects against high-fee-exposure bots whose opportunity flow no longer supports repeated attempts.
\item We quantify the chain-level consequences. After configuration changes, priority-fee revenue shifts away from failed attempts toward realized arbitrages and higher-bidding probabilistic search attempts, while spam consumes a smaller share of blockspace.

\end{enumerate}

\subsection{Related work}

MEV competition by targeted search has been documented on Ethereum Layer-1~\cite{flashboys,wu2025measuringcexdex,qin2022quantifying,wang2022cyclicarbitragedecentralizedexchanges,nonatomic,torres2021frontrunnerjones,mchlaughlin2023largescalearbs, highfreq,defiliquidations}, alternate Layer-1s~\cite{oz2024playingmevgamefirstcomefirstserved, wang2026mevbinancebuilder}, Layer-2 networks~\cite{torres2025rollingshadowsanalyzingextraction}, and cross-network~\cite{burakcrosschainarb,obadia2021unity,gogol2024crossrollupmevnonatomicarbitrage}. Recent work documents probabilistic search on high-throughput blockchains: \cite{chen2023mevonsolana} introduces it as a viable strategy on Solana under faster block times and lower fees, and \cite{ozan_lioba_paper,fbspampost} show that similar strategies consume substantial blockspace on Base and Optimism, though \cite{ozan_lioba_paper} notes a different pattern on Arbitrum without formally distinguishing the two architectures. A related strand studies reverts and mechanisms for reducing failed MEV attempts~\cite{gogol2025priorityfailsrevertbasedmev, zhu2025quantifyingvaluerevertprotection,christof_timeboost}. Beyond empirical analysis, \cite{wang2026blockspacepressureanalysisspam} derives equilibrium spam volume under varying block capacity, minimum gas price, and priority-fee ordering, while \cite{mazorra2026timinggamesprobabilisticbackrunning} analyzes the timing game underlying probabilistic search competition. These models focus on probabilistic search in isolation. Our paper distinguishes targeted and probabilistic search architectures, connects failed attempts to architecture choice, and studies how ordering granularity, fee floors, and opportunity access shift competition between them.






\section{Preliminaries: Arbitrage Search Architectures}\label{sec:preliminaries}

\Cref{sec:intro} introduced targeted and probabilistic search as two economic architectures. A central setting for our analysis is cyclic arbitrage, where a bot routes a token through two or more decentralized exchange (DEX) pools and returns to the starting token with a positive net profit. On high-throughput chains, bots often submit candidate transactions before knowing with certainty whether execution will be profitable. Failed attempts can therefore arise under both architectures: a route-committed transaction may find that the targeted opportunity has disappeared, while a probabilistic search transaction may fail to find any profitable route during execution.
We use \emph{failed attempts} as an umbrella term for arbitrage attempts that do not execute a profitable swap sequence. Within this category, \emph{probes} terminate after state reads without executing swaps, while \emph{reverts} revert during execution. Both consume execution resources and are commonly treated as spam.
We next describe how these architectures appear at the \emph{implementation level}, focusing on where route discovery occurs and what evidence is visible in transaction traces.

\parhead{Off-chain discovery with direct execution.} 
The cleanest implementation of targeted search is a transaction that proceeds directly from entry to swap execution with no pool-state read before the first swap.\footnote{Consider example tx hash: \href{https://basescan.org/tx/0x1d93f62a009974da3ae575e7c623d88f09693b3e4e64ff0eabbac821f3fa091e}{\texttt{0x1d93f62a009974da3ae575e7c623d88f09693b3e4e64ff0eabbac821f3fa091e}}.} The route, DEX venues, and input amount are embedded in the transaction calldata, indicating that the opportunity was discovered, simulated, and parameterized off-chain before submission. 



\parhead{Route-confined on-chain evaluation.}

In the second implementation, the bot also begins with a pre-selected route embedded in the calldata, but reads live state from the relevant pools before committing to swaps.\footnote{Consider example tx hash: \href{https://basescan.org/tx/0x47a91fe3b9c31818f6c7f6699f6675297a8b926af48084e0c17cf8504a44df14}{\texttt{0x47a91fe3b9c31818f6c7f6699f6675297a8b926af48084e0c17cf8504a44df14}.}} These reads can be interpreted as a profitability check: if the route remains profitable, the transaction proceeds to swap execution; otherwise, it terminates early or reverts. Importantly, the pre-swap reads are confined to the venues that the transaction later executes against. This pattern is economically closer to targeted search than to probabilistic search, but it is intrinsically ambiguous from traces alone: it could also reflect discovery over a confined venue set that happens to coincide with the executed venues.


\parhead{Broad on-chain discovery.} The canonical implementation of probabilistic search is a transaction that scans a wider set of pools than it ultimately executes on.\footnote{Consider example tx hash: \href{https://basescan.org/tx/0xc020e534c39e203e9f3a42ea5f89ca023e486c38e15ca55e2a183ead1f4c885d}{\texttt{0xc020e534c39e203e9f3a42ea5f89ca023e486c38e15ca55e2a183ead1f4c885d}.}} The bot reads state from candidate venues to identify which subset offers a profitable route, then executes the selected route, or terminates or reverts if no profitable route is found. This strict superset relation is the trace-level signature of route discovery during execution.




\parhead{How bots read pool state.} We identify four mechanisms by which arbitrage bots read the DEX pool state during transaction execution.
\begin{enumerate}
    \item \textbf{Direct pool reads.} The bot, or a proxy contract it invokes via \texttt{CALL} or \texttt{DELEGATECALL}, issues a \texttt{STATICCALL} a public view function on the DEX pool contract --- for example, \texttt{getReserves()} on a Uniswap V2-style pool, \texttt{slot0()}, \texttt{liquidity()}, or \texttt{tickBitmap()} on a concentrated liquidity pool. The returned values describe the pool's current configuration and allow the bot to price candidate trades.
    \item \textbf{Balance-based reads.} For pools whose reserves correspond to ERC-20 token balances, the bot can issue a \texttt{STATICCALL} to \texttt{balanceOf(pool\_address)} on the relevant token contract to infer pool composition without interacting with the pool itself.\footnote{Consider example tx hash: \href{https://basescan.org/tx/0x2edd5bba978fbc4ded358f6f54f20643389c2dd02fd389ba55d06ab98e5ce358}{\texttt{0x2edd5bba978fbc4ded358f6f54f20643389c2dd02fd389ba55d06ab98e5ce358}.}}
    \item \textbf{Revert-based quotes.} The bot calls a quote function to the pool that simulates a swap and intentionally reverts with the quoted output amount encoded in the revert data. This pattern appears most often on Balancer pools: Balancer exposes dedicated query functions (\texttt{queryBatchSwap} on V2 \cite{balancerV2}, \texttt{quoteAndRevert} on V3 \cite{balancerv3}) that simulate the swap and return its result, and the protocol documents these as intended primarily for off-chain simulation via \texttt{eth\_call}.\footnote{Consider example tx hash: \href{https://basescan.org/tx/0x2ec4dbff9f3dd9f62ae648e32c49300283f742ef993b984fe2a92ec84ea24a22}{\texttt{0x2ec4dbff9f3dd9f62ae648e32c49300283f742ef993b984fe2a92ec84ea24a22}.}}
    \item \textbf{Helper contracts.} For some pool designs, the bot can read state by issuing a \texttt{STATICCALL} to a separate helper contract instead of the pool itself. For example, Uniswap V4 provides a \texttt{StateView} helper contract for off-chain simulation \cite{uniswapv4stateview}.\footnote{Consider example tx hash: \href{https://basescan.org/tx/0x11d2d594b738959a7be3ddb8bbc00f302e8d1414448c9821ed0e6e212e71e1dd}{\texttt{0x11d2d594b738959a7be3ddb8bbc00f302e8d1414448c9821ed0e6e212e71e1dd}.}}
\end{enumerate}
\section{A reduced-form equilibrium model}
\label{sec:model}

This section presents a stylized equilibrium model of how ordering granularity, transaction costs, and opportunity access shape the relative competitiveness of arbitrage bot architectures. 
Entry adjusts until expected profits are driven to zero. 

\parhead{Ordering regimes and bot architecture.} 
We consider two ordering regimes $r \in \{B, F\}$, where $B$ denotes the baseline regime and $F$ denotes a finer ordering regime. The relevant ordering unit is a regular full block in regime $B$, but a smaller sub-block slice in regime $F$.

We next describe two representative bot architectures. Following the implementation-level vocabulary of \Cref{sec:preliminaries}, we refer to them as \emph{off-chain discovery}, denoted by $O$, and \emph{on-chain discovery}, denoted by $C$, corresponding to targeted and probabilistic search, respectively. Bots compete over candidate arbitrage opportunities. We assume that under the chain's ordering regime, an opportunity is exhausted by the first valid claimant in the earliest profitable ordering unit.

\parhead{Same-slice competitiveness.}

On-chain discovery transactions perform state reads to evaluate candidate routes, so they consume more blockspace than route-committed off-chain transactions targeting the same opportunity. Under finer ordering, early sub-block slices have smaller effective capacity, so heavier on-chain transactions are more likely to be displaced, allowing a lean off-chain transaction to capture the opportunity in the earlier slice.
\footnote{Under Flashblocks on Base \cite{flashblocksconfig}, transactions exceeding 14M gas cannot fit in the first Flashblock.}



We summarize this disadvantage by a reduced-form \emph{same-slice competitiveness} function
\[
\eta_r \in [0,1),
\qquad r\in\{B,F\},
\]
where $\eta_r$ is the probability that an on-chain discovery transaction remains competitive in the same earliest profitable slice as an off-chain discovery transaction in regime $r$. The key restriction is $\eta_F < \eta_B$: on-chain discovery is less likely to remain competitive in the same profitable slice under a finer ordering regime than under the baseline regime.

\parhead{Opportunity access and free-entry payoffs.}
\added{The two architectures differ in opportunity access. An off-chain discovery transaction may encode several pre-identified routes, but its candidate route set is fixed before submission. An on-chain discovery transaction instead reads the current state and can select among a state-contingent set of candidate routes during execution. We summarize this potential breadth advantage through architecture-specific opportunity-access rates. Let $\lambda_i$ denote the expected number of candidate opportunities per model period accessible to architecture $i \in \{O,C\}$. With average gross value $V$ conditional on successful capture, $\lambda_iV$ is the gross opportunity value flow accessible to architecture $i$, before competition and transaction costs.\footnote{We assume $V$ to be common across architectures. In reality, off-chain discovery may extract more value through optimized off-chain simulation; this asymmetry would only reinforce the qualitative results.} We allow $\lambda_C \ge \lambda_O$ to represent the potentially broader access of on-chain discovery.}

\added{
Let $m_i$ denote the effective mass of independent competing strategies using architecture $i$. Here, ``mass'' is the standard continuum-entry measure of participation rather than a literal count of unique addresses or contracts observed on-chain. In a fraction $\eta_r$ of competitive states, an on-chain discovery transaction remains in the earliest profitable ordering slices and competes with off-chain discovery transactions. Conditional on this case, the opportunity is allocated uniformly across the total active mass $m_O +m_C$. In the remaining states, the on-chain discovery transaction is displaced, and only off-chain discovery bots compete, uniformly across $m_O$. Each bot incurs an expected cost $K_i(b)>0$ per modeled period, where $b>0$ is the minimum base fee. Applying this reduced form of the sharing rule to each architecture’s accessible opportunity flow yields the payoffs below.}

The expected payoff of an off-chain discovery bot in regime $r$ is 
\[
\Pi_O^r(m_O,m_C;b)
=
\eta_r\,\frac{\lambda_O V}{m_O+m_C}
+
(1-\eta_r)\,\frac{\lambda_O V}{m_O}
-
K_O(b).
\]
\added{The first term is the off-chain bot’s share when both architectures remain competitive in the earliest profitable slice; the second is its share when on-chain discovery is displaced; the final term is its expected cost. An on-chain-discovery bot earns opportunity value only in the fraction $\eta_r$ of states in which its heavier transaction remains competitive in the earliest profitable slice:}
\[
\Pi_C^r(m_O,m_C;b)
=
\eta_r\,\frac{\lambda_C V}{m_O+m_C}
-
K_C(b).
\]

A competitive equilibrium in regime $r$ is a pair $(m_O^{\ast,r},m_C^{\ast,r})$ such that no bot architecture earns positive expected profit, and any active bot architecture earns zero expected profit:
\[
\Pi_i^r(m_O^{\ast,r},m_C^{\ast,r};b)\le 0
\quad\text{for } i\in\{O,C\},
\quad
m_i^{\ast,r}>0
\;\Rightarrow\;
\Pi_i^r(m_O^{\ast,r},m_C^{\ast,r};b)=0.
\]
Thus, active architectures earn zero profit, and inactive architectures cannot enter profitably.

\begin{remark}[Blockspace competitiveness versus opportunity creation]
The model isolates the blockspace competitiveness channel rather than the opportunity-creation channel. Finer ordering may also change the value or arrival rate of arbitrage opportunities by reducing the order flow aggregated into each interval. In our notation, such effects would enter through $V$ or $\lambda_i$.
We hold $V$ and $\lambda_i$ fixed to focus on architecture composition. 
\end{remark}
For simplicity, we next focus on equilibria with active off-chain discovery architecture. Other equilibrium configurations can be derived analogously.

\begin{proposition}[Within-regime equilibrium]
\label{prop:regime_eq}
Fix $b\ge 0$, $r\in\{B,F\}$, and $\kappa_C(b)>\kappa_O(b)>0$. On the equilibrium branch with active off-chain discovery, the unique equilibrium is
\[
(m_O^{\ast,r},m_C^{\ast,r})
=
\begin{cases}
\left(
\dfrac{(1-\eta_r)\kappa_O(b)}{1-\kappa_O(b)/\kappa_C(b)},
\dfrac{\eta_r\kappa_C(b)-\kappa_O(b)}{1-\kappa_O(b)/\kappa_C(b)}
\right),
& \eta_r\kappa_C(b)>\kappa_O(b),\\[1.2em]
\left(\kappa_O(b),0\right),
& \eta_r\kappa_C(b)\le \kappa_O(b).
\end{cases}
\]
Thus, on-chain discovery is active if and only if $\eta_r\kappa_C(b)>\kappa_O(b)$, \added{i.e., its access-adjusted value-cost advantage is large enough to overcome its ordering disadvantage.}
\end{proposition}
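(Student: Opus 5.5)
Throughout I read $\kappa_i(b)$ as the access-adjusted value-cost ratio $\kappa_i(b):=\lambda_i V/K_i(b)$. It is not defined explicitly before the statement, but it is the only normalization under which the displayed formulas are dimensionally consistent. With it, the payoffs become
\[
\Pi_O^r=K_O(b)\Big[\kappa_O\Big(\tfrac{\eta_r}{m_O+m_C}+\tfrac{1-\eta_r}{m_O}\Big)-1\Big],
\qquad
\Pi_C^r=K_C(b)\Big[\tfrac{\eta_r\kappa_C}{m_O+m_C}-1\Big],
\]
writing $\kappa_i$ for $\kappa_i(b)$. Since $K_i(b)>0$, the sign conditions in the equilibrium definition depend only on the bracketed terms. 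The proof is then a case analysis on whether $m_C$ is zero or positive, given $m_O>0$ on the branch considered.

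\emph{Corner case ($m_C=0$).} Zero profit for $O$ reads $\kappa_O/m_O=1$, so $m_O=\kappa_O$. This is the only solution, because $\Pi_O^r$ is strictly decreasing in $m_O$. The no-entry condition for $C$ at $(\kappa_O,0)$ is $\eta_r\kappa_C/\kappa_O\le 1$, i.e. $\eta_r\kappa_C\le\kappa_O$. So this configuration is an equilibrium exactly in the second case of the statement.

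\emph{Interior case ($m_C>0$).} Both zero-profit conditions must hold.
\begin{itemize}
\item Zero profit for $C$ pins down the total mass: $S:=m_O+m_C=\eta_r\kappa_C$.
\item Substituting into zero profit for $O$, the first term becomes $\eta_r/S=1/\kappa_C$. This leaves $(1-\eta_r)/m_O=1/\kappa_O-1/\kappa_C$.
\item The right-hand side is strictly positive precisely because $\kappa_C>\kappa_O$; this is where that hypothesis is used. Solving gives $m_O=(1-\eta_r)\kappa_O/(1-\kappa_O/\kappa_C)$, which is positive since $\eta_r<1$.
\item Then $m_C=\eta_r\kappa_C-m_O$. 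Putting both terms over the common denominator $\kappa_C-\kappa_O$ simplifies this to $\kappa_C(\eta_r\kappa_C-\kappa_O)/(\kappa_C-\kappa_O)$, which is the displayed expression.
\end{itemize}
Hence $m_C>0$ if and only if $\eta_r\kappa_C>\kappa_O$. In that case both architectures earn exactly zero, so the equilibrium conditions hold.

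\emph{Uniqueness.} Each case has exactly one solution of its own system, by the two computations above. The two cases are also mutually exclusive, and this is the step I expect to need the most care. If $\eta_r\kappa_C>\kappa_O$, the corner fails because $C$ could enter profitably. If $\eta_r\kappa_C\le\kappa_O$, the interior candidate has $m_C\le 0$, which contradicts $m_C>0$. The degenerate value $\eta_r=0$ falls into the corner case automatically. Combining the two cases gives the piecewise formula, and the ``if and only if'' characterization of when on-chain discovery is active follows directly.
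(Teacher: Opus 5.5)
Your proposal is correct and follows essentially the same route as the paper's proof. The paper also takes $\kappa_i=\lambda_iV/K_i(b)$, obtains $m_O+m_C=\eta_r\kappa_C$ from the on-chain zero-profit condition and substitutes it into the off-chain one to get $m_O$ and then $m_C$, and verifies the off-chain-only profile $(\kappa_O,0)$ through the no-entry condition $\eta_r\kappa_C\le\kappa_O$; your explicit handling of the mutual exclusivity of the two cases and of $\eta_r=0$ makes the uniqueness claim slightly more transparent but adds no new idea.
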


Next, we show how a decline in same-slice competitiveness, from $\eta_B$ to $\eta_F$, shifts equilibrium participation away from on-chain discovery and toward off-chain discovery.

\begin{theorem}[Finer ordering reduces the competitiveness of on-chain discovery]
\label{thm:fb_competitiveness}
Fix $b \geq 0$, and suppose $\eta_F < \eta_B$ and both regimes are on the coexistence branch. Then
\[
m_C^{\ast,F}(b) < m_C^{\ast,B}(b),
\qquad
m_O^{\ast,F}(b) > m_O^{\ast,B}(b).
\]
Switching to the finer ordering regime reduces on-chain participation and increases off-chain participation on the coexistence branch.
\end{theorem}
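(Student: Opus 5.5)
The plan is to read the comparative statics directly off the closed form in \Cref{prop:regime_eq}. I will use the interpretation implicit there, $\kappa_i(b)=\lambda_i V/K_i(b)$, the access-adjusted value-cost ratio of architecture $i$. A quick consistency check supports it. On the coexistence branch the zero-profit condition for $C$ gives $m_O+m_C=\eta_r\kappa_C(b)$. Substituting this into the zero-profit condition for $O$ gives $1/\kappa_C+(1-\eta_r)/m_O=1/\kappa_O$. Solving these two equations reproduces the expressions in the proposition. The crucial structural observation is that $\lambda_i$, $V$ and $K_i(b)$ are held fixed across regimes, as stated in the remark. So at a fixed $b$, $\kappa_O(b)$ and $\kappa_C(b)$ do not depend on $r$. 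The regime therefore enters the equilibrium only through $\eta_r$.

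First, I note that the hypothesis ``both regimes are on the coexistence branch'' means $\eta_r\kappa_C(b)>\kappa_O(b)$ for $r\in\{B,F\}$. Hence the first case of \Cref{prop:regime_eq} applies in both regimes. This also requires $\kappa_C(b)>\kappa_O(b)>0$, which is implied since $\eta_r<1$. Under that condition the common denominator $D(b)=1-\kappa_O(b)/\kappa_C(b)$ is strictly positive.

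Second, I view each equilibrium mass as an affine function of $\eta$ with regime-independent coefficients:
\[
m_O^{\ast}(\eta)=\frac{\kappa_O(b)}{D(b)}(1-\eta),\qquad
m_C^{\ast}(\eta)=\frac{\kappa_C(b)}{D(b)}\,\eta-\frac{\kappa_O(b)}{D(b)}.
\]
The slope of $m_O^\ast$ in $\eta$ is $-\kappa_O(b)/D(b)<0$, and the slope of $m_C^\ast$ is $\kappa_C(b)/D(b)>0$. Evaluating at $\eta_F<\eta_B$ then gives the two strict inequalities directly. Explicitly, $m_C^{\ast,B}-m_C^{\ast,F}=\kappa_C(\eta_B-\eta_F)/D>0$ and $m_O^{\ast,F}-m_O^{\ast,B}=\kappa_O(\eta_B-\eta_F)/D>0$.

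There is no serious obstacle. The only point needing care is justifying that $\kappa_i(b)$ is the same in both regimes, so that the comparison isolates $\eta_r$; this is exactly the modeling choice in the remark. The other point is confirming that $D(b)>0$, which fixes the signs of the slopes. As an intuitive cross-check, total mass $m_O+m_C=\eta_r\kappa_C$ falls with $\eta$, while off-chain bots absorb the displaced states and expand.
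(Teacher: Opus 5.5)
Your proposal is correct and matches the paper's proof. Both arguments read the coexistence masses off \Cref{prop:regime_eq} as affine functions of $\eta$, with slopes $\kappa_C/(1-\kappa_O/\kappa_C)>0$ and $-\kappa_O/(1-\kappa_O/\kappa_C)<0$, and then evaluate them at $\eta_F<\eta_B$; your explicit note that $\kappa_i(b)$ does not depend on the regime states precisely what the paper leaves implicit.

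One small wording slip in your cross-check: total mass $\eta_r\kappa_C$ is increasing in $\eta$. It falls as $\eta$ drops from $\eta_B$ to $\eta_F$, not ``with'' $\eta$ in the usual sense.
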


To capture selection within on-chain discovery, we allow subtypes to differ in their \emph{per-attempt scan footprint}, which measures how much venue-search work a single attempt performs. It is distinct from \emph{attempt intensity}, i.e., how many attempts a bot sends. A larger scan footprint may broaden the candidate set evaluated within an attempt, but also makes the attempt heavier, reducing same-slice competitiveness under finer ordering and raising gas cost. This trade-off creates a selection margin within the on-chain discovery category.

\begin{proposition}[Finer ordering selects against high-footprint on-chain discovery]
\label{prop:selection_on_complexity}
Under the regularity conditions defined in \Cref{app:model_fb}, let on-chain discovery subtypes be indexed by a per-attempt scan-footprint parameter $x$, and let $x_r^\ast(b)$ denote the largest viable footprint in regime $r$. Then $x_F^\ast(b) < x_B^\ast(b)$,
so finer ordering selects against more scan-intensive on-chain-discovery strategies.
\end{proposition}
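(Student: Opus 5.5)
The plan is to reduce \Cref{prop:selection_on_complexity} to a pointwise comparison of a single viability index across regimes, using the activity criterion of \Cref{prop:regime_eq}. I take the regularity conditions of \Cref{app:model_fb} to be the following. Each on-chain subtype $x$ carries a same-slice competitiveness $\eta_r(x)$, an access rate $\lambda_C(x)$ and a cost $K_C(x,b)$, and hence an access-adjusted value-cost ratio $\kappa_C(x,b)=\lambda_C(x)V/K_C(x,b)$. These satisfy:
\begin{itemize}
\item $\eta_r(x)$ is decreasing in $x$;
\item the finer regime is pointwise worse, $\eta_F(x)<\eta_B(x)$;
\item the viability index $g_r(x)=\eta_r(x)\kappa_C(x,b)$ is strictly decreasing in $x$, or at least single-crossing from above, on the relevant range.
\end{itemize}
The last condition means the broader access from a larger footprint eventually loses to the heavier gas cost and the displacement.

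\textbf{Step 1.} Show that a subtype is viable in regime $r$ if and only if $g_r(x)>\kappa_O(b)$. Apply the argument behind \Cref{prop:regime_eq} subtype by subtype. Consider a single entrant of type $x$ facing the off-chain architecture in regime $r$. Its zero-profit condition, combined with the off-chain zero-profit condition, gives exactly the threshold $\eta_r(x)\kappa_C(x,b)>\kappa_O(b)$. Under the single-crossing regularity condition, the viable set is therefore an interval $\{x\le x_r^\ast(b)\}$, and $x_r^\ast(b)$ is characterized by $g_r(x_r^\ast)=\kappa_O(b)$. At a boundary, $x_r^\ast$ is instead the endpoint of the support.

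\textbf{Step 2.} Compare the regimes. The factor $\kappa_C(x,b)$ does not depend on $r$, and $\eta_F(x)<\eta_B(x)$, so $g_F(x)<g_B(x)$ for every $x$. Evaluating at $x_B^\ast$ gives $g_F(x_B^\ast)<g_B(x_B^\ast)=\kappa_O(b)$, so subtype $x_B^\ast$ is not viable under $F$. Because $g_F$ is decreasing, every $x\ge x_B^\ast$ is also nonviable under $F$. Hence $x_F^\ast<x_B^\ast$, and the inequality is strict because the crossing point of $g_F$ lies strictly to the left. If the regularity conditions allow $g_r$ to be merely single-crossing rather than monotone, I would run the same argument on the last crossing point.

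\textbf{Main obstacle.} The hard step is justifying Step 1 when several on-chain subtypes are simultaneously active. In that case the allocation $\lambda V/(m_O+m_C)$ mixes subtypes with different $\eta_r(x)$, and the threshold a marginal subtype must clear is endogenous through the total mass. My first attempt is to show that the marginal entrant's zero-profit condition still reduces to a comparison with a common equilibrium constant $T_r$, and then to prove $T_F\ge T_B$. The intuition for $T_F\ge T_B$ is that under $F$ the off-chain mass rises (as in \Cref{thm:fb_competitiveness}) and on-chain rivals are displaced more often, so the effective threshold weakly rises. With that, the pointwise drop in $g$ only reinforces the shift of the cutoff to the left. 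If this monotonicity of $T_r$ cannot be established in general, I would fall back on assuming, as part of the regularity conditions, that the viability threshold is pinned down by $\kappa_O(b)$, as in the off-chain-active branch of \Cref{prop:regime_eq}.
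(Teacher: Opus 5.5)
Your Steps 1--2 are the paper's proof: the paper declares an on-chain subtype viable iff $R_r(x,b)=\eta_r(x)\kappa_C(x,b)/\kappa_O(b)>1$ by construction (your fallback, so the multi-subtype issue never arises), assumes $R_r$ is continuous and strictly decreasing in $x$ with $R_F<R_B$ pointwise, and reads $x_F^\ast(b)<x_B^\ast(b)$ off the crossing points $R_r(x_r^\ast(b),b)=1$. Drop the $T_F\ge T_B$ route, whose intuition runs backwards (an on-chain entrant shares against the total mass $m_O+m_C$, which on the coexistence branch equals $\eta_r\kappa_C$ and therefore falls under finer ordering), and note that strictness needs the interior crossing the paper assumes via continuity; the support-endpoint case you allow does not give it.
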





This selection result implies that, among surviving on-chain discovery bots, observable scan-footprint measures, such as the number of pre-swap state reads, should weakly fall under finer ordering. We formalize this prediction in \Cref{app:model_fb}, \Cref{cor:simpler_survivors}.




\begin{remark}[Off-chain discovery reliability under finer ordering]
\label{rem:offchain_reliability}
Finer ordering also shortens the interval between opportunity observation and execution, reducing the stale-state risk faced by off-chain discovery. In the model, this channel would raise the effective value-cost ratio of off-chain discovery by reducing failed attempts.
This stale-state channel and the blockspace-competitiveness channel both predict a shift away from broad on-chain discovery toward route-committed execution, so the composition test in \Cref{sec:flashblocks} cannot separately identify them. However, the stale-state channel more directly predicts improved off-chain attempt efficiency. The rise in off-chain discovery success rates in \Cref{tab:flashblocks_attempt_efficiency_spam} is therefore suggestive evidence for this margin, although not separately identified from changes in blockspace competition.
\end{remark}

The theorems above isolate the blockspace-competitiveness channel under a finer ordering regime.
We then turn to the cost channel. A higher minimum base fee raises the cost of every attempt, so it selects against architectures with higher fee exposure.



\begin{proposition}[A higher minimum base fee can disproportionately hurt on-chain discovery]
\label{prop:fee_floor}
Define architecture $i$'s fee exposure and on-chain discovery effective advantage as
\[
\chi_i(b):=\frac{\partial \log K_i(b)}{\partial b},
\qquad
\rho(b):=\frac{\kappa_C(b)}{\kappa_O(b)}.
\]
If $\chi_C(b)>\chi_O(b)$, then $\rho(b)$ is strictly decreasing in $b$. Equivalently, the coexistence condition $\eta_r\kappa_C(b)>\kappa_O(b)$ becomes harder to satisfy as the fee floor rises.
\end{proposition}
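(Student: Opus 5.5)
The plan is to work in logarithms, using the value-cost ratio that appears in \Cref{prop:regime_eq}. I take $\kappa_i(b)$ to be the access-adjusted value-cost ratio of architecture $i$,
\[
\kappa_i(b)=\frac{\lambda_i V}{K_i(b)},
\]
which is the normalization that makes the zero-profit conditions $\Pi_O^r=\Pi_C^r=0$ produce the expressions in \Cref{prop:regime_eq}. Under this reading,
\[
\rho(b)=\frac{\lambda_C}{\lambda_O}\cdot\frac{K_O(b)}{K_C(b)}.
\]
Here $\lambda_C$, $\lambda_O$ and $V$ do not depend on $b$; the Remark after the model holds them fixed. So $b$ enters $\rho$ only through the two cost functions.

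\textbf{Step 1: differentiate $\log\rho$.} Take logarithms, which is legitimate because $K_i(b)>0$ and $\lambda_i V>0$:
\[
\log\rho(b)=\log\frac{\lambda_C}{\lambda_O}+\log K_O(b)-\log K_C(b).
\]
Assuming $K_i$ is differentiable in $b$, which is already implicit in the definition of $\chi_i$,
\[
\frac{d\log\rho(b)}{db}=\chi_O(b)-\chi_C(b).
\]
By hypothesis $\chi_C(b)>\chi_O(b)$, so this derivative is strictly negative. Since $\exp$ is strictly increasing, $\rho$ is strictly decreasing wherever the hypothesis holds. If the hypothesis is only imposed on an interval of fee floors, the conclusion is monotonicity on that interval, and I would state it that way.

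\textbf{Step 2: rewrite the coexistence condition.} Because $\kappa_O(b)>0$ and $\eta_r>0$ (when $\eta_r=0$ coexistence is impossible and there is nothing to prove), the condition $\eta_r\kappa_C(b)>\kappa_O(b)$ is equivalent to
\[
\rho(b)>\frac{1}{\eta_r}.
\]
The threshold $1/\eta_r$ does not depend on $b$, since $\eta_r$ is fixed by the ordering regime. So the set of fee floors satisfying the condition is an upper contour set of a strictly decreasing function. Concretely, if coexistence holds at $b'$, then it holds at every $b<b'$. Equivalently, if it fails at $b$, it fails at every $b'>b$. The same monotone threshold also explains why $m_C^{\ast,r}$ on the coexistence branch eventually hits zero as $b$ rises: by \Cref{prop:regime_eq}, its numerator is proportional to $\eta_r\rho(b)-1$.

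\textbf{Main obstacle.} The only real issue is interpretive: confirming that $\kappa_i$ is $\lambda_iV/K_i(b)$, i.e.\ that the dependence on $b$ runs entirely through $K_i$. I would check this first by re-deriving \Cref{prop:regime_eq} from the zero-profit conditions. If the paper's $\kappa_i$ absorbs other $b$-independent constants, the argument is unchanged, because those constants drop out of the derivative of $\log\rho$.
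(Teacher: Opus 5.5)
Your proof is correct and follows the paper's argument: you write $\rho(b)=(\lambda_C/\lambda_O)\,K_O(b)/K_C(b)$, differentiate $\log\rho$ to get $\chi_O(b)-\chi_C(b)<0$, and rewrite coexistence as a comparison with a $b$-independent threshold. The paper does the same, stating the threshold as $\eta_r>\kappa_O(b)/\kappa_C(b)$, and your reading $\kappa_i(b)=\lambda_iV/K_i(b)$ is exactly the definition used in its proof of \Cref{prop:regime_eq}.
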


The compositional prediction requires asymmetric fee exposure. If the fee floor scaled every architecture's cost by the same proportional factor, then $\chi_C(b)=\chi_O(b)$, and a higher fee floor would reduce total entry without changing architecture composition. Empirically, this condition is consistent with on-chain discovery consuming more gas per attempt and sending more attempts per successful arbitrage (cf. \Cref{sec:architecture_character}).

The same logic applies within on-chain discovery: holding opportunity access and same-slice competitiveness fixed, higher fee floors reduce the advantage of on-chain subtypes with more fee-sensitive cost functions. We formalize this in \Cref{app:model_fee}, \Cref{prop:fee_exposure_selection}.


\begin{remark}[What fee exposure means empirically]
\label{rem:fee_exposure_decomposition}
Fee exposure depends not only on per-attempt scan footprint, but also on attempt intensity. Thus, higher fee floors should be interpreted as raising the hurdle rate for repeated probing, not as mechanically eliminating the most scan-intensive bots. High-exposure bots may remain viable if their opportunity flow is valuable enough, while lower-exposure bots may exit if their opportunity flow is weak. 
\end{remark} 

The blockspace and cost channels above hold the opportunity environment fixed. 
Opportunity access shocks move the other margin. Token launches and fresh DEX deployments can temporarily raise $\lambda_C$ by creating routes that off-chain discovery architectures have not yet indexed or parameterized. Such shocks can revive on-chain discovery participation even under otherwise unfavorable policy conditions.

\begin{proposition}[Opportunity-access shocks can revive on-chain discovery]
\label{prop:launch}
On the coexistence branch, $m_C^{\ast,r}$ is strictly increasing and $m_O^{\ast,r}$ strictly decreasing in $\lambda_C$. There exists a viability threshold $\bar{\lambda}_C(r,b)$ such that on-chain discovery is active iff $\lambda_C > \bar{\lambda}_C(r,b)$, and this threshold is strictly increasing in policy frictions. Hence, a temporary increase in $\lambda_C$ can revive on-chain discovery even when the policy conditions render it inactive.
\end{proposition}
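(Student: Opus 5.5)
The plan is to use the closed form in \Cref{prop:regime_eq} and track where $\lambda_C$ enters. Reading off the monopoly branch $m_O^{\ast,r}=\kappa_O(b)$, where $\lambda_OV/m_O=K_O(b)$, I take $\kappa_i(b):=\lambda_iV/K_i(b)$ as the access-adjusted value-cost ratio. I will first re-derive the coexistence formulas from the zero-profit conditions. The condition $\Pi_C^r=0$ gives $m_O+m_C=\eta_r\kappa_C(b)$. Substituting this into $\Pi_O^r=0$ gives $(1-\eta_r)\kappa_O/m_O=1-\kappa_O/\kappa_C$, which reproduces the expressions in \Cref{prop:regime_eq}. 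Thus $m_C^{\ast,r}=\eta_r\kappa_C-m_O^{\ast,r}$.

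The key structural observation is that $\kappa_C$ is strictly increasing (linear) in $\lambda_C$. Meanwhile $\kappa_O$, $K_O$, $K_C$ and $\eta_r$ do not depend on $\lambda_C$. For comparative statics on the coexistence branch:
\begin{itemize}
\item Raising $\lambda_C$ raises $\kappa_C$. This strictly increases the denominator $1-\kappa_O/\kappa_C$, which is positive since $\kappa_C>\kappa_O$ under the coexistence condition. Since the numerator $(1-\eta_r)\kappa_O>0$ is fixed and $\eta_r<1$, $m_O^{\ast,r}$ strictly decreases.
\item Hence $m_C^{\ast,r}=\eta_r\kappa_C-m_O^{\ast,r}$ is a strictly increasing term minus a strictly decreasing one, so it strictly increases. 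This step needs $\eta_r>0$, which holds on the coexistence branch because $\eta_r\kappa_C>\kappa_O>0$.
\end{itemize}
If desired, one can also differentiate explicitly in $\lambda_C$ as a check.

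For the threshold, the activity criterion $\eta_r\kappa_C(b)>\kappa_O(b)$ from \Cref{prop:regime_eq} rewrites as
\[
\lambda_C>\bar\lambda_C(r,b):=\frac{\lambda_O K_C(b)}{\eta_r K_O(b)}.
\]
Monotonicity in frictions then splits into two channels.
\begin{itemize}
\item \textbf{Ordering friction.} $\bar\lambda_C$ is strictly decreasing in $\eta_r$. Hence $\eta_F<\eta_B$ gives $\bar\lambda_C(F,b)>\bar\lambda_C(B,b)$, consistent with \Cref{thm:fb_competitiveness}.
\item \textbf{Fee friction.} $\partial_b\log\bar\lambda_C=\chi_C(b)-\chi_O(b)$, which is strictly positive under the asymmetric-exposure hypothesis of \Cref{prop:fee_floor}.
\end{itemize}
The main obstacle is interpretive rather than computational: "strictly increasing in policy frictions" must be pinned down. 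I would state it as strict monotonicity in $1/\eta_r$, and in $b$ under $\chi_C>\chi_O$, noting that without fee asymmetry the fee channel is only weak.

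Finally, revival follows directly. Suppose policy conditions $(r,b)$ make on-chain discovery inactive, i.e.\ $\lambda_C\le\bar\lambda_C(r,b)$, so the equilibrium is $(\kappa_O,0)$. Any temporary shock pushing $\lambda_C$ above $\bar\lambda_C(r,b)$ moves the economy onto the coexistence branch of \Cref{prop:regime_eq}, where $m_C^{\ast,r}>0$. Because the model is static, "temporary" means evaluating the equilibrium period by period at the shocked $\lambda_C$.
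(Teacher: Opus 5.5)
Your proposal is correct and matches the paper's proof step for step. Both use the same rewriting of $\eta_r\kappa_C(b)>\kappa_O(b)$ as $\lambda_C>\bar\lambda_C(r,b)=\lambda_O K_C(b)/(\eta_r K_O(b))$, the same two friction channels (monotonicity in $\eta_r$, and $\partial_b\log\bar\lambda_C=\chi_C(b)-\chi_O(b)$ under asymmetric fee exposure), and the same revival argument. The only difference is cosmetic: the paper signs $\partial m_C^{\ast,r}/\partial\kappa_C$ and $\partial m_O^{\ast,r}/\partial\kappa_C$ by explicit differentiation and then applies $\partial\kappa_C/\partial\lambda_C=V/K_C(b)>0$. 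You instead read the signs off the closed forms by monotonicity, which is equivalent and does not actually need $\eta_r>0$, since $-m_O^{\ast,r}$ is already strictly increasing.
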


\begin{remark}[Participation and observed spam]
\label{rem:participation_spam}
Free entry links the model's equilibrium masses to empirical architecture prominence: when a policy change lowers an architecture's expected payoff, the equilibrium supports
a smaller mass of participants using that architecture.
Empirically, lower architecture mass can be reflected in fewer active bots or a lower architecture share of arbitrage count or volume. Aggregate spam, however, also depends on attempt intensity per bot. Thus, spam may not fall proportionally if surviving or entering bots submit more attempts.
\end{remark}

\section{Data and Methodology}
\label{sec:methodology}

We use Base as a case study to measure arbitrage competition between search architectures and their response to protocol changes. Base is a natural setting because it has substantial DEX activity and spam volume \cite{ozan_lioba_paper,wang2026blockspacepressureanalysisspam, fbspampost}, and because it implemented major configuration changes, most notably Flashblocks and subsequent increases in the minimum base fee \cite{baseconfigchange}. This section describes the data construction and bot-architecture classification methodology.

\subsection{Data collection}
We collect cyclic arbitrages and associated spam activity on Base from June 1, 2025, through February 28, 2026, covering the periods of major configuration changes on Base \cite{baseconfigchange}.

\parhead{Successful cyclic arbitrages.}
We extend the cyclic arbitrage detection method of \cite{ozan_lioba_paper}. Using a custom Dune query, we detect 21,374,434 cyclic arbitrages by 4,365 unique arbitrage bot addresses over the observed period. For each transaction, we collect gas usage, fees, USD notional volume, and the realized arbitrage route, including traded tokens and DEX pools.

\parhead{Failed arbitrage attempts (spam).} 
We separately query reverted and probe transactions that are attributed to the same arbitrage bot addresses. Importantly, the bot does not need to interact with the DEX pool contract to read its state (cf. \Cref{sec:preliminaries}). 
To capture this broader class of probing behavior, we classify as probes successful transactions that emit no logs from external contracts other than the bot itself and carry ETH value $\le$ 1 WEI.\footnote{Simple ETH transfer also emits no external logs. Certain arbitrage bots typically attach 1 WEI to their transactions, such as \href{https://basescan.org/address/0xA4B9eDB99806bFB8BbEE89D13F4c296DAF225dEE}{\texttt{0xA4B9eDB99806bFB8BbEE89D13F4c296DAF225dEE}}. Thus, we set 1 WEI as a buffer.} 

This yields 985,559,360 spam transactions from these cyclic arbitrage bots over the observed period, including 35,653,523 reverts and 949,905,837 probes. Because transaction-level analysis of this full spam set is infeasible, we aggregate it at the bot-week level: for each bot in each week, we collect the count of probes and reverts, corresponding total gas usage and fees, and share of transaction count and gas usage out of the chain's throughput.

\subsection{Bot architecture identification}
\parhead{Transaction-level labeling.}
Because failed-attempt outcomes can arise from any of the three implementation patterns, distinguishing architectures requires trace-level evidence rather than transaction outcomes alone. We identify architecture from successful arbitrage transactions, where the executed route is observable.
The objective is to distinguish whether a transaction arrives with a route already committed, evaluates only the route it eventually executes, or scans multiple candidate venues before choosing an execution route.

Since tracing all 21.37M arbitrages is infeasible, for each bot in each week during the observed period, we randomly sample up to 50 arbitrage transactions, yielding 359,752 sampled transactions across 4,365 bots and 39 weeks. We label these transactions using a trace-level classifier applied to \texttt{debug\_traceTransaction} call traces with an Alchemy RPC node. The classifier distinguishes three transaction-level labels: 

\smallskip\noindent\texttt{OFFCHAIN\_DIRECT} transactions show no meaningful pre-swap state read, or use only helper contract or revert-based quoting on the eventual execution route.

\noindent\texttt{ONCHAIN\_EVAL} transactions perform on-chain evaluation before execution, but that evaluation is confined to the route that is eventually executed.

\noindent\texttt{ONCHAIN\_BROAD\_SCAN} transactions evaluate venues beyond the eventual execution set before committing to the winning route.
\smallskip


\parhead{Trace decomposition.}
The classifier decomposes each transaction trace into top-level route-attempt subtrees and identifies the venues used in successful execution. It then compares pre-execution venue reads with this executed-venue set. Reads of venues later used in execution are treated as confined route evaluation, while reads of non-executed venues are treated as evidence of route discovery. Helper contracts,\footnote{This filtering applies to the helper contract address itself. If a helper-call subtree reaches downstream pool contracts, those downstream venues are not filtered and are retained only as weak evidence of \texttt{ONCHAIN\_BROAD\_SCAN}.} token infrastructure, and callback mechanics inside committed swaps are filtered so that execution mechanics are not mistaken for venue search. 

We treat three kinds of trace evidence as informative. First, we identify confirmed direct pool-state reads, such as reserve reads and concentrated-liquidity state queries. Second, we identify balance-based pool reads, where a bot calls \texttt{balanceOf(pool)} on a token contract to infer candidate-pool balances. For singleton-pool-manager designs, we treat \texttt{balanceOf} targets from the successful route attempt as route-local when the executed route uses the singleton manager, since the balance target need not coincide with the executed PoolManager address. Third, we identify revert-based quotes and helper-contract calls used to obtain exact route pricing. We treat these two as weaker evidence than direct or balance-based venue expansion: extra venues reached only through reverted quotes or helper calls do not by themselves trigger \texttt{ONCHAIN\_BROAD\_SCAN}. 
We present the pseudo-code of the classifier implementation in \Cref{app:classifier_implement}.


\parhead{From transaction labels to bot architecture}
\label{sec:bot_week_labeling}
We aggregate transaction-level labels into weekly bot architectures using the implementation patterns introduced in \Cref{sec:preliminaries}. A bot-week is labeled \emph{on-chain discovery} if its sample contains at least one manually validated \texttt{ONCHAIN\_BROAD\_SCAN} transaction. It is labeled \emph{off-chain discovery} if it contains \texttt{OFFCHAIN\_DIRECT} behavior and no credible broad-scan evidence after manual validation. It is labeled \emph{on-chain evaluation} only if all sampled transactions are \texttt{ONCHAIN\_EVAL}. This asymmetry reflects the fact that direct execution and broad venue expansion are clear endpoint behaviors, while confined on-chain evaluation can arise under either off-chain discovery with on-chain re-validation or narrow on-chain discovery.



This rule assumes architecture is persistent within a bot-week, rather than switching transaction by transaction. Broad on-chain scanning requires different execution logic from route-committed execution, so genuine changes to off-chain architecture are more plausibly associated with redeployment or a new bot identity. Because labels are assigned each week, the same bot can still receive different labels across weeks if its observed behavior changes over longer horizons. We then project weekly labels onto the full arbitrage dataset and weekly spam datasets, producing a unified bot-week panel with architecture, successful arbitrage activity, and aggregate spam behavior.

\parhead{Manual validation.}
Since \texttt{OFFCHAIN\_DIRECT} and \texttt{ONCHAIN\_BROAD\_SCAN} transactions should not genuinely coexist within the same bot-week, we manually review mixed cases. The source of false \texttt{ONCHAIN\_BROAD\_SCAN} classification comes from aggregator-mediated execution, especially 1inch Fusion limit order settlement, which can form a closed token loop on-chain without representing DEX-to-DEX price-dislocation arbitrage.\footnote{Consider example tx hash: \href{https://basescan.org/tx/0xe97cd4871c8c0707ef38fe092db088daf035ce464865a5637b65bb63b3ebd7f2}{\texttt{0xe97cd4871c8c0707ef38fe092db088daf035ce464865a5637b65bb63b3ebd7f2}}.} We exclude these transactions from the architecture sample and manually correct the affected bot-week labels.

\parhead{Scope and limitations.}
\added{We infer architecture from successful transactions because failed attempts often do not reveal a realized execution route, making route-confined evaluation and broad scanning indistinguishable. Projecting the weekly label onto spam therefore assumes that successful and failed attempts within a bot-week are generated by the same underlying architecture. While this cannot be easily validated transaction by transaction, the distinct architecture behavior we identified in the later sections provides indirect support for the projection.} 
The classifier is conservative in two further ways: singleton PoolManager designs can hide venue expansion, and sampling up to 50 transactions may miss rare direct-execution or broad-scan transactions. We quantify these effects in \Cref{app:classifier_robustness}. 





\section{Empirical Results}
\label{sec:empirical_result}
We organize the empirical results around the model's comparative statics and Base configuration changes during the sample period \cite{baseconfigchange}. We first establish that the three architecture labels correspond to distinct successful arbitrage and spam profiles. We then study how the prominence of each architecture changes around three empirical episodes: the introduction of Flashblocks, the September on-chain-discovery surge, and the escalation of the minimum-base-fee. Finally, we quantify how each architecture's response to these episodes translates into chain-level outcomes, including priority-fee payments and blockspace consumption.

\subsection{Bot architecture characteristics}
\label{sec:architecture_character}
We begin by documenting how the three bot architectures differ in their observable behavior, both in successful arbitrages and in spam during our sample period. These differences validate that the classifier separates economically distinct architectures rather than spurious trace clusters, and they support the model's key primitives: on-chain discovery bots operate over broader venue sets with lower success rates, and have higher fee exposure.


\parhead{Spam character.}
The three architectures produce strikingly different spam activities. \Cref{tab:bot_label_activity} summarizes full-sample activity by weekly bot label. The table reports the number of bots, successful arbitrages, spam transactions, reverts, probes, average bot-level success rate, spam transaction gas usage, and average cost per arbitrage opportunity. Success rate is computed at the bot level as successful arbitrages divided by total attempts, where total attempts include successful arbitrages and spam transactions. Spam gas is computed as each bot-week's average gas per spam transaction and then summarized across bot-weeks. Average cost per successful arbitrage is total gas cost across successful and failed attempts divided by successful arbitrages.

\begin{table}[t]
\centering
\resizebox{\textwidth}{!}{%
\begin{tabular}{lcccccccc}
\toprule
Bot label & Bots & Arbs & Spam & Reverts & Probes
& \makecell{Success \\ rate} & \makecell{Median / P95 \\ avg. spam tx gas} & \makecell{Avg. cost per\\success}\\
\midrule
Off-chain discovery
& 1,634 & 7.67M & 23.63M & 16.86M & 6.78M 
& 42.51\% & 352K / 1.02M & 0.000095 ETH\\
On-chain evaluation
& 798 & 8.83M & 27.63M & 12.51M & 15.12M
& 18.49\% & 64.8K / 371K & 0.000045 ETH\\
On-chain discovery
& 2,049 & 4.87M & 934.30M & 6.29M & 928.01M
& 4.58\% & 277K / 1.39M & 0.000253 ETH\\
\bottomrule
\end{tabular}
}
\caption{Bot activity by weekly architecture label, June 1, 2025--February 28, 2026. Bot counts are unique addresses observed with each label at least once; categories are not mutually exclusive because labels can vary across weeks.}
\label{tab:bot_label_activity}
\end{table}

Off-chain discovery bots produce the least spam and have the highest average success rate at 42.51\%. The spam they produce is predominantly reverts rather than probes. Because routes are pre-committed, their transactions typically proceed directly to swap execution and revert only when profitability conditions fail close to the end of execution. The resulting reverts traverse part or all of the swap logic, which explains their highest median average spam gas usage (352K) of the three architectures.  Their average cost per successful arbitrage is 0.000095 ETH --- moderate despite heavy per-spam gas usage, because relatively few failed attempts accompany each successful arbitrage.

On-chain evaluation bots produce spam split roughly evenly between reverts and probes, which we interpret as an implementation choice rather than a meaningful architectural distinction: both outcomes terminate a failed attempt after verifying a confined route set. Their evaluation is lightweight and confined to a narrow candidate venue set, so their average spam gas usage is the lowest of the three architectures (median 64.8K, P95 371K). Their average cost per captured opportunity is correspondingly the lowest at 0.000045 ETH, reflecting both cheap spam and a modest success rate.

On-chain discovery bots generate the largest spam activity with 934.30M total spam transactions, almost all of which are probes. This matches their architecture: the bot reads state across many venues to discover whether an opportunity exists at all, and most such reads find none, leaving a probe transaction on-chain. The median average spam gas usage (277K) is higher than that of on-chain evaluation, consistent with broader venue scanning, and the P95 (1.39M) reflects a tail of very wide route searches. Their average cost per opportunity is 0.000253 ETH, much higher than other architectures --- the compound effect of heavy spam transactions and a large number of them per successful arbitrage. The average success rate is correspondingly low at 4.58\%.

\parhead{Successful arbitrage fingerprints.} 
The architectural distinction also shows up in how successful arbitrage transactions are structured. \Cref{fig:gas_used_cdf} plots the cumulative distribution function of successful arbitrage gas usage by bot label. Successful arbitrages by on-chain discovery bots are more gas-intensive, consistent with broad venue scanning making these transactions heavier. We also observe that off-chain discovery transactions use more gas than on-chain evaluation transactions. This does not contradict their route-committed classification: off-chain discovery bots may execute more complex routes, use flash loans,\footnote{Consider example tx hash: \href{https://basescan.org/tx/0x05d5aeccc0ecc11013b62481ea92f47213c200e9ec6bad0c72598d306be40526}{\texttt{0x05d5aeccc0ecc11013b62481ea92f47213c200e9ec6bad0c72598d306be40526}}.} or route through venues such as Balancer, where execution may involve a vault and oracle calls.

\begin{figure}[t]
\centering
\begin{minipage}[t]{0.38\textwidth}
\vspace{0pt}
\centering
\includegraphics[
  width=\linewidth,
  trim=0.25cm 0.25cm 0.25cm 0.25cm,
  clip
]{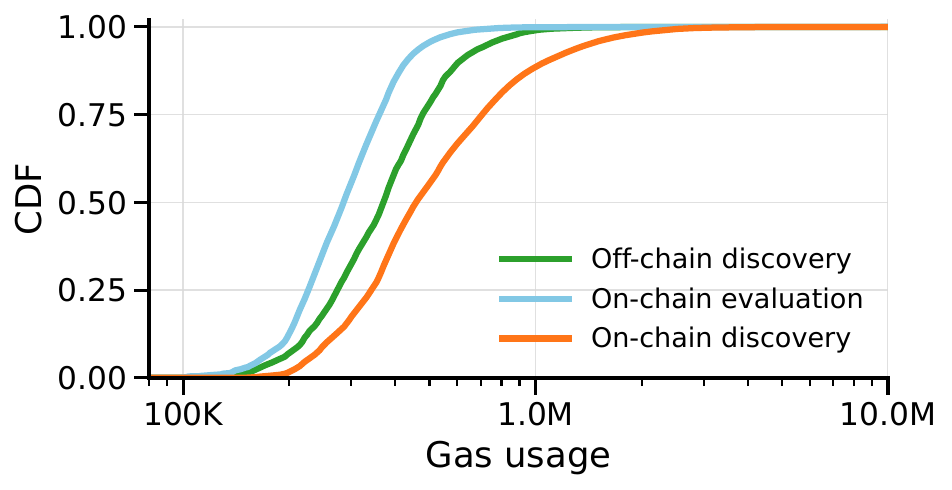}
\caption{Arbitrage gas usage CDF.}
\label{fig:gas_used_cdf}
\end{minipage}
\hfill
\begin{minipage}[t]{0.61\textwidth}
\vspace{0pt}
\centering
\renewcommand{\arraystretch}{1.5}
\resizebox{\linewidth}{!}{%
\begin{tabular}{@{}lcccc@{}}
\toprule
& \multicolumn{2}{c}{Bot-level architecture}
& \multicolumn{2}{c}{Transaction-level footprint} \\
\cmidrule(lr){2-3}\cmidrule(l){4-5}
Bot label
& \makecell{Median \\ calldata length}
& \makecell{Average / Median \\venue commitment}
& \makecell{Median \\ pre-reads}
& \makecell{oracle usage} \\
\midrule
Off-chain disc. & 516 & 0.607 / 0.971 & 0  & 22.61\%\\
On-chain eval. & 150 & 0.535 / 0.886 & 3  & 4.61\%\\
On-chain disc. & 91  & 0.184 / 0.000 & 40 & 0.08\%\\
\bottomrule
\end{tabular}
}
\captionof{table}{Sampled arbitrage trace-level metrics.}
\label{tab:architecture_metrics}
\end{minipage}
\end{figure}

\Cref{tab:architecture_metrics} reports complementary architecture and footprint measures from the sampled successful arbitrage transactions. Calldata length and venue commitment are summarized at the bot level to avoid overweighting high-activity bots; pre-execution reads and oracle usage are summarized at the transaction level because they measure per-transaction footprint. Venue commitment is a conservative address-level proxy for route pre-commitment, defined as the fraction of executed venue addresses embedded in calldata.

Off-chain discovery bots have the longest median calldata and the highest venue commitment, and their transactions have a median of zero pre-execution reads, consistent with pre-committed routes carried in the transaction itself and direct execution behavior. On-chain discovery bots show the opposite pattern with the shortest median calldata, zero median venue commitment, and 40 median pre-execution reads in their transactions, indicating that the route is resolved during execution through dense state reads across many venues. On-chain evaluation bots sit between the two, with modest pre-reads against the confined set of venues they subsequently execute on.

The bot-level average venue-commitment score for off-chain discovery is 0.607, lower than a literal route-committed interpretation might suggest. This reflects a limitation of calldata-based address matching rather than necessarily incomplete route commitment. Venue addresses may be hardcoded in the bot contract, derived through helper calls, or represented indirectly through singleton AMM interfaces, so the metric can record low or zero commitment even when the route is pre-specified.\footnote{For example, transaction \href{https://basescan.org/tx/0x009f0869c035769e1eee4ff1651e1949ec6f2054cae7866c916a417a0d407d17}{\texttt{0x009f0869c035769e1eee4ff1651e1949ec6f2054cae7866c916a417a0d407d17}} executes an ETH--USDT--USDC--ETH arbitrage across three Uniswap v4 pools. The calldata contains PoolKey-like parameters, such as tokens and hook addresses, but not the PoolManager address, contributing to the venue commitment metric being 0 for this transaction.} We therefore interpret venue commitment as a conservative proxy for route pre-commitment. The combination of long calldata and zero median pre-execution reads remains the stronger fingerprint of off-chain route commitment.


\parhead{Time-series overview}

We next document the time-series evolution of the prominence of each architecture during the observed period. \Cref{fig:arb_spam_timeseries} presents the weekly successful arbitrage, USD volume, and spam activity by architecture. Across the full sample, on-chain discovery bots account for 22.8\% of the successful arbitrages and 28\% of volume, but produce nearly 95\% of the total spam activity. \Cref{fig:bot_count_arb_share_timeseries}A plots the weekly count of active bots by architecture. In most weeks, 30--50 active on-chain discovery bots generate roughly 20M spam transactions per week. By contrast, off-chain discovery and on-chain evaluation have larger active populations, but generate substantially less spam while accounting for a larger share of successful arbitrages.
\begin{figure}[t]
    \centering
    \includegraphics[width=1\linewidth]{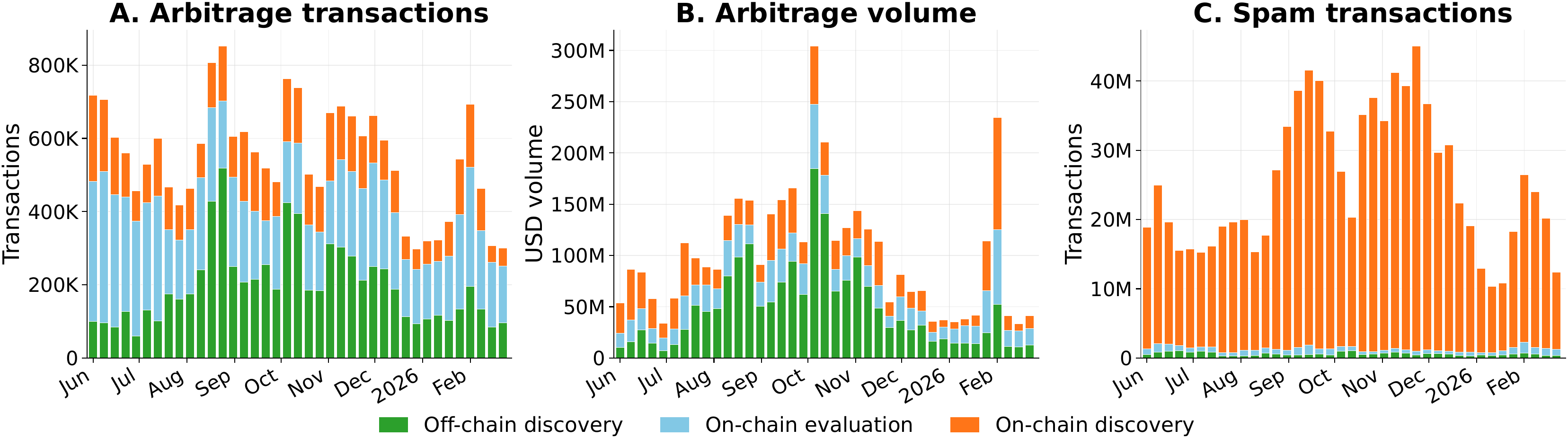}
    \caption{Weekly successful arbitrage transaction count, USD volume, and spam transaction count by each bot architecture label.}
    \label{fig:arb_spam_timeseries}
\end{figure}
\Cref{fig:bot_count_arb_share_timeseries}B--C tracks active bots and volume shares by architecture. The series highlights three episodes we study below: Flashblocks on July 7, 2025; a temporary on-chain discovery surge in early September 2025; and the minimum-base-fee escalation beginning December 4, 2025. These episodes correspond to the model's three comparative statics: finer ordering, opportunity-access shocks, and fee-floor increases.

\begin{figure}[t]
    \centering
    \includegraphics[width=1\linewidth]{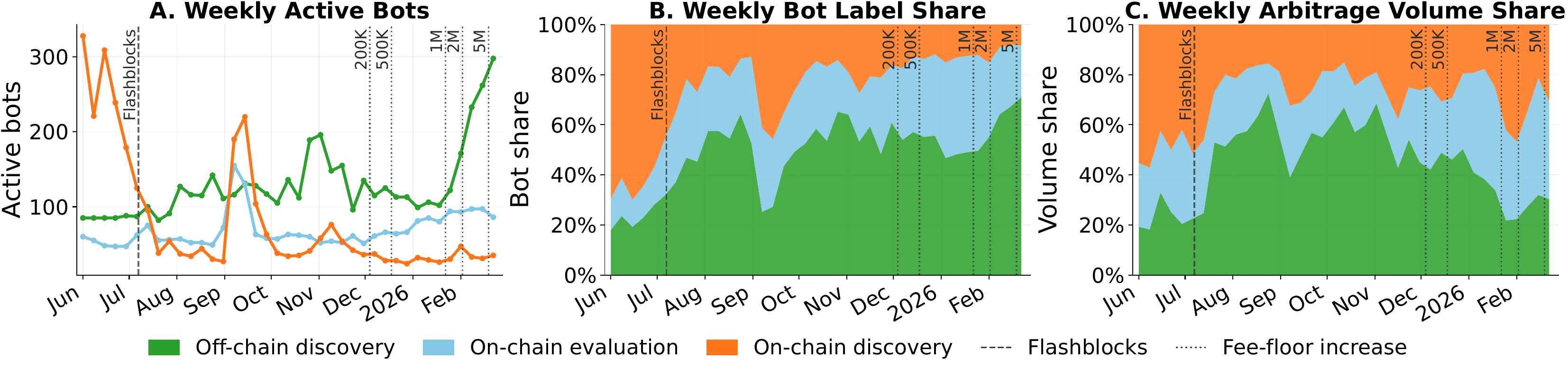}
    \caption{Weekly number of active bots per label, weekly active bot label share, and weekly volume share per bot label. Vertical lines mark configuration changes of Flashblocks and fee floors.}
    \label{fig:bot_count_arb_share_timeseries}
\end{figure}


\subsection{Flashblocks}
\label{sec:flashblocks}
The Flashblocks activation on July 7, 2025, introduced 200ms sub-block pre-confirmations within Base's 2-second block cycle, and each Flashblock has an incrementally increasing space capacity \cite{flashblocksconfig}, substantially finer than the full-block ordering. Our model predicts two patterns around this transition: a decline in on-chain discovery participation (cf. \Cref{thm:fb_competitiveness}); selection against more complex on-chain discovery strategies (cf. \Cref{prop:selection_on_complexity}) and towards simpler on-chain discovery survivors (cf. \Cref{cor:simpler_survivors}). We empirically validate these predictions using a four-week symmetric event window around the rollout: pre weeks June 8 – July 6, 2025, post weeks July 13 – August 10, 2025, with the event week itself excluded.

\parhead{Population and volume shift.}
The composition of active bots changes sharply after the Flashblocks introduction. The number of total distinct active bots declines by 53\%, falling from 983 in the pre-window to 462 in the post-window. The contraction is concentrated almost entirely in on-chain discovery architecture: its share of active bots falls from 63.1\% to 25.1\%, while off-chain discovery rises from 23.4\% to 46.7\% and on-chain evaluation rises from 13.4\% to 28.2\%. As illustrated in \Cref{fig:bot_count_arb_share_timeseries}A, the on-chain discovery population peaks above 300 weekly active in mid-June, declines through the pre-window, and stabilizes around 30 weekly active bots in the post-window. The off-chain discovery population, in contrast, expands steadily from roughly 85 to over 150 weekly active bots by early August.

\Cref{fig:bot_count_arb_share_timeseries}C presents a consistent story in terms of arbitrage volume. On-chain discovery's share of weekly arbitrage USD volume falls from roughly 50-60\% pre-Flashblocks to around 30\% post-Flashblocks and remains at that level for the remainder of the observation window. Off-chain discovery expands from roughly 20\% to over 50\% of volume, and on-chain evaluation remains broadly stable. These movements match \Cref{thm:fb_competitiveness}: finer ordering reduces participation from on-chain discovery architecture and increases off-chain participation.

A natural concern is that on-chain discovery was already declining before Flashblocks, falling from 309 weekly active bots by June 15 to 125 by July 6. This decline likely reflects the unwinding of a May-June opportunity cycle around AI-agent tokens on Base.\footnote{Examples include KTA's May price run-up, RDAC's May token launch and Binance listing, VIRTUAL's late-May peak and June correction, and AIXBT's late-May Binance.US listing. We use these events only as context for a contemporaneous opportunity cycle; the empirical results below rely on the on-chain data rather than on attributing the decline to any specific token.} However, a purely demand-side explanation is less plausible: on-chain discovery's volume share remains stable during the pre-window, and around Flashblocks, aggregate arbitrage activity rebounds (cf. \Cref{fig:arb_spam_timeseries}), yet on-chain discovery bots continue to contract from 179 in the week before Flashblocks to 38 three weeks after.

\parhead{Selection against on-chain discovery.}

The composition shift around Flashblocks is driven primarily by selection rather than within-bot label switching: among the 174 bots active in both pre- and post-windows, 167 retain the same label, so the shift mainly reflects differential exit and replacement. The selection margin is sharpest within on-chain discovery. To characterize it, we compare three cohorts within the on-chain discovery architecture: bots active only in the pre-window (\emph{selected out}), only in the post-window (\emph{entrants}), and in both windows under the same label (\emph{survivors}). \Cref{tab:flashblocks_onchain_cohort} reports the bot-level architecture and transaction trace-level footprint metrics calculated from sampled arbitrages for three cohorts, using only their bot-weeks labeled as on-chain discovery.

\begin{table}[t]
\centering
\resizebox{\linewidth}{!}{%
\begin{tabular}{@{}lrrrrrrr@{}}
\toprule
& & \multicolumn{3}{c}{Bot-level architecture}
& \multicolumn{3}{c}{Transaction-level footprint} \\
\cmidrule(lr){3-5}\cmidrule(l){6-8}
Cohort & Bots
& \makecell{Median \\ calldata length}
& \makecell{Average venue \\ commitment}
& \makecell{Median venue \\ commitment}
& \makecell{Median \\ pre-reads}
& \makecell{Median \\ arbitrage tx gas}
& \makecell{Median of avg.\\ spam tx gas} \\
\midrule
Selected out (pre only) & 741 & 36 & 0.044 & 0.000 & 57 & 855K & 671K \\
Entrants (post only) & 120 & 244 & 0.258 & 0.000 & 46 & 630K & 272K \\
Survivors (pre $\rightarrow$ post) & 37
& 163 $\rightarrow$ 302
& 0.345 $\rightarrow$ 0.340
& 0.000 $\rightarrow$ 0.000
& 39 $\rightarrow$ 12
& 662K $\rightarrow$ 437K
& 226K $\rightarrow$ 103K \\
\bottomrule
\end{tabular}
}
\caption{On-chain discovery cohort composition around Flashblocks. Bot-level metrics first summarize each bot within the relevant window and then summarize across bots. Transaction-level metrics summarize sampled arbitrage transactions and spam bot-week medians within the same four-week pre- and post-event windows.}
\label{tab:flashblocks_onchain_cohort}
\end{table}


The 741 selected-out on-chain discovery bots have the highest scan footprint in the dataset. Their median bot-level calldata length is only 36 bytes, median venue commitment is zero, and their transaction-level footprint is the heaviest, with 57 median pre-reads and the highest gas usage for both arbitrage and spam transactions. In other words, the bots most heavily dependent on short-calldata, low-commitment, high-read scanner logic are precisely the ones that vanish after Flashblocks.

Relative to the selected-out bots, the entrant cohort is markedly less scan-intensive --- longer calldata, higher (average) venue commitment, fewer pre-reads, and lower gas usage on successful arbitrages and spam transactions. These patterns likely help explain why they can enter after Flashblocks.
The survivors are simpler still. Even in the pre-window, surviving on-chain discovery bots exhibit the least search behavior across the three cohorts, suggesting that survival is concentrated among bots whose on-chain search is already relatively lean. 
By contrast, the corresponding architecture changes for off-chain discovery and on-chain evaluation population are modest around Flashblocks (cf. \Cref{appendix:flashblocks_result}, \Cref{tab:flashblocks_offchain_architecture}). These results confirm \Cref{prop:selection_on_complexity,cor:simpler_survivors} that finer ordering selects against on-chain discovery with higher per-attempt scan-footprint and towards less scan-intensive ones.

\parhead{Adaptation within on-chain discovery.}

Same-label survivor behavior and a small number of label switches show that adaptation exists, but it is quantitatively smaller than selection. In \Cref{tab:flashblocks_onchain_cohort}, surviving on-chain discovery bots become less scan-intensive after Flashblocks: their calldata becomes longer, venue commitment rises, and median pre-execution reads fall. This pattern suggests that survivors were not only leaner before the shock, but also adjusted further toward more committed and less scan-intensive execution after Flashblocks. In other words, adaptation operates on the intensive margin among the bots that remain competitive, while the extensive-margin exit of high-search bots remains the dominant response.

We observe a similar narrowing among the small set of bots that switch from on-chain discovery to on-chain evaluation. 
Four bots switch from consistently on-chain discovery while active before Flashblocks to consistently on-chain evaluation while active afterward, consistent with a shift away from broad venue scanning toward evaluation confined to the eventual execution route. One example is \texttt{0x57b5f8d2b95cd55086bf0c954051547271b68fa9}: in the pre-window, 97 of its 100 sampled arbitrage transactions are labeled \texttt{ONCHAIN\_BROAD\_SCAN}, while in the post-window its sampled transactions no longer exhibit broad-scan behavior, although the post-window sample contains only 7 transactions. We therefore interpret these switches cautiously. They need not imply that the bot's discovery process moved fully off-chain; rather, they indicate that the observable on-chain component of search became narrower, moving from broad scanning toward route-confined evaluation. Together with the survivor metrics, this evidence supports a secondary adaptation margin, but one that is much smaller than the selection effect documented above.

The same selection and adaptation patterns hold in a two-week window around Flashblocks, suggesting that the result is not driven by the choice of event window or by later post-event weeks (cf. \Cref{appendix:flashblocks_result}).

\parhead{Attempt efficiency and spam concentration.}

The selection documented above concerns the transaction-level footprint of arbitrage attempts. It does not imply that aggregate spam activity should fall mechanically, because spam also depends on how many failed attempts each bot sends. \Cref{tab:flashblocks_attempt_efficiency_spam} compares spam production and attempt efficiency by architecture over the four-week pre- and post-Flashblocks windows.


The spam response is highly asymmetric. Off-chain discovery and on-chain evaluation both reduce total spam and spam per active bot-week. On-chain discovery, however, remains spam-intensive: despite active bots falling from 782 to 157, total spam rises from 68.43M to 70.50M, and spam per active bot-week rises from 72.49K to 314.71K. Because on-chain discovery dominates spam in both windows, aggregate spam is nearly flat.

\begin{table}[t]
\centering
\resizebox{\linewidth}{!}{%
\begin{tabular}{lrrrrrr}
\toprule
Bot label
& \makecell{Success rate \\ pre $\rightarrow$ post}
& \makecell{Gas cost / success \\ pre $\rightarrow$ post}
& \makecell{Total spam txs \\ pre $\rightarrow$ post}

& \makecell{Spam per\\active bot-week}
& \makecell{Active bots \\ pre $\rightarrow$ post} \\
\midrule
Off-chain disc.
& 35.8\% $\rightarrow$ 43.6\%
& 0.000140 $\rightarrow$ 0.000113
& 3.73M $\rightarrow$ 1.78M
(-52.3\%)
& 11.13K $\rightarrow$ 4.83K
& 134 $\rightarrow$ 207 \\
On-chain eval.
& 26.1\% $\rightarrow$ 23.1\%
& 0.000015 $\rightarrow$ 0.000027
& 3.66M $\rightarrow$ 2.54M
(-30.7\%)
& 18.70K $\rightarrow$ 10.45K
& 86 $\rightarrow$ 113 \\
On-chain disc.
& 5.7\% $\rightarrow$ 3.8\%
& 0.000293 $\rightarrow$ 0.000350
& 68.43M $\rightarrow$ 70.50M
(+3.0\%)
& 72.49K $\rightarrow$ 314.71K
& 782 $\rightarrow$ 157 \\
\bottomrule
\end{tabular}
}
\caption{Attempt efficiency and spam production around Flashblocks. 
Success rates and gas cost per successful arbitrage are bot-level averages within each window. Spam per active bot-week normalizes spam count by the number of labeled bot-week observations. Bot counts are not mutually exclusive because labels may vary across weeks.}
\label{tab:flashblocks_attempt_efficiency_spam}
\end{table}


Attempt efficiency moves in a similar direction. Flashblocks improves attempt efficiency for off-chain discovery: average success rate rises from 35.8\% to 43.6\%, and gas cost per successful arbitrage falls from 0.000140 to 0.000113 ETH. This is consistent with reduced stale-state risk between off-chain opportunity discovery and execution (cf. \Cref{rem:offchain_reliability}). On-chain discovery moves in the opposite direction, with lower success rates and higher gas cost per success. Despite fewer failed attempts, the attempt efficiency of on-chain evaluation declines, as their successful arbitrages fall (1.403M pre-window $\rightarrow$ 0.855M post-window).

The persistence of on-chain discovery spam reflects a response in attempt intensity among the bots that remain. Flashblocks selects against many broad scanner bots and leaves less scan-intensive survivors, but it does not directly price attempt frequency. By shortening the window in which each opportunity remains available for on-chain discovery, finer ordering can lower per-attempt yield; surviving discovery bots can respond by probing more frequently. In this sense, Flashblocks changes who participates in on-chain discovery and narrows the form of surviving search, but does not by itself eliminate high-frequency probing.


\parhead{Summary.}
The Flashblocks event study supports the model's central mechanism: finer ordering reduces the competitiveness, and therefore the participation, of on-chain discovery, especially among the most scan-intensive bots. The dominant margin is selection: broad scanners exit, same-label survivors are already narrower and simplify further, and the market shifts toward route-committed off-chain discovery. Off-chain discovery becomes more attempt-efficient after Flashblocks, consistent with the stale-state channel: shorter ordering intervals make route-committed execution more reliable. However, this architectural selection does not translate into a proportional reduction in total spam. Aggregate spam remains nearly flat because on-chain discovery spam becomes concentrated among fewer, higher attempt-intensity entrants and survivors. Thus, Flashblocks reduces broad on-chain discovery and improves off-chain execution efficiency, but its effect on aggregate spam is partially offset by higher attempt intensity among the remaining on-chain discovery bots.

\subsection{The September surge}
\label{sec:sep_surge}
The post-Flashblocks on-chain discovery population stabilizes at roughly 30-50 weekly active bots through August 2025. However, this number spikes sharply in early September: active on-chain discovery bots rise to 220 in the week beginning September 14, before quickly falling back to 62 by September 28 and stabilizing around 35 early October (cf. \Cref{fig:bot_count_arb_share_timeseries}A). A smaller spike also appears among on-chain evaluation, while the off-chain discovery remains comparatively stable. 
Because the aggregate time series does not show a broad, sustained market-wide expansion (cf. \Cref{fig:arb_spam_timeseries}), we interpret the episode as a localized entry wave around newly attractive venues rather than a broad expansion in arbitrage opportunities.
For the remainder of this section, we refer to the four weekly buckets from September 7--13 through September 28--October 4 as the September surge window.



\parhead{A short-lived entrant wave.}
The entrant composition shows that the surge is bimodal: a few serious entrants and a much larger mass of short-lived triers. Across the September surge window, 504 unique on-chain discovery bots are active, of which 
464 (92.1\%) are both first and last seen within these weeks. Most are active only briefly: 426 bots appear only in one week, and 470 appear in at most two weeks. The same pattern appears in transaction intensity: median transient on-chain discovery bots execute only 29 surge-window arbitrages, compared with 262.5 for non-transient bots (cf. \Cref{appendix:sep_surge_result}, \Cref{tab:september_entrant_wave}).

\parhead{AVNT/MIRROR opportunity wave.}
The entrant wave is centered on two newly launched tokens: AVNT, launched on September 9 \cite{avnt}, and MIRROR, launched on September 8 \cite{mirror}. \Cref{fig:avnt_mirror}A shows that AVNT and MIRROR form a visible share of arbitrage volume from September 7 to September 28, and account for much of the contemporaneous increase in total volume.
After September 28, AVNT/MIRROR arbitrage volume falls quickly, coinciding with the disappearance of the temporary on-chain discovery entrant population.

\begin{figure}[t]
    \centering
    \includegraphics[width=1\linewidth]{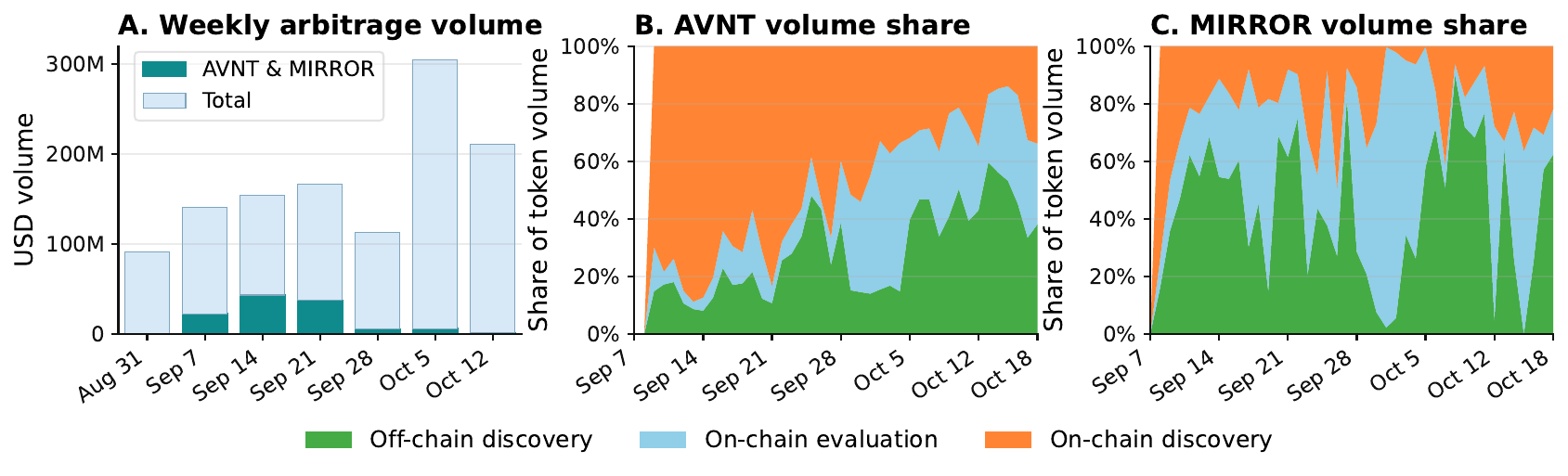}
    \caption{AVNT and MIRROR arbitrage activity around the surge window. Panel A shows weekly AVNT/MIRROR volume relative to total volume, with the x-axis showing the week-start date. Panels B and C show AVNT and MIRROR daily volume share, respectively, by bot architecture.}
    \label{fig:avnt_mirror}
\end{figure}

AVNT and MIRROR account for a large share of arbitrage activity during the surge window. From September 7 to October 4, AVNT or MIRROR appears in 420,541 arbitrages (19.3\% of all arbitrages) and \$109.1M in volume (19.0\%). The concentration is much stronger within on-chain discovery: 42.8\% of arbitrages and 50.2\% of volume involve AVNT/MIRROR, spanning 412 of the 504 active on-chain discovery bots. Most of this activity is driven by AVNT (\$79.1M vs.\ \$0.5M for MIRROR).

The concentration is especially strong among short-lived triers. Of the 464 transient on-chain discovery bots, 84.3\% execute at least one AVNT/MIRROR arbitrage, and 54.7\% of them have a majority (>50\%) of their activity in these tokens during the surge window.
By contrast, persistent on-chain discovery bots are more diversified: median AVNT/MIRROR share is only 2.0\%, and 35 of 40 have a majority of their activity outside these tokens. 

The swarm does not persist. \Cref{fig:avnt_mirror}B--C show that on-chain discovery's share of AVNT/MIRROR volume falls quickly post-surge, while the other two architectures absorb the bulk of the flow. The opportunity itself does not disappear --- AVNT continues to generate arbitrage activity post-surge (Oct 5, 2025 to Feb 28, 2026), with 120,872 transactions across 308 bots, compared with 396,957 transactions across 619 bots during the surge weeks. What disappears is the surge-window on-chain discovery cohort: only 13 of the 504 September on-chain discovery bots remain active on post-surge AVNT flow, and all on-chain discovery bots account for 29.5\% of post-surge AVNT volume, while off-chain discovery bots capture the largest share of post-surge AVNT volume (41.4\%). The MIRROR pattern is similar, with only the 7 surge-window cohort bots surviving post-surge. Off-chain discovery bots capture the largest 60.9\% post-surge MIRROR volume, while on-chain discovery bots fall to 24\%.

We interpret this transition as evidence that on-chain discovery is temporarily more competitive when new token routes are not yet routinized. This need not mean that existing on-chain discovery contracts discover new pools faster. Instead, early token-launch markets can generate dispersed retail flow, uncertain venue quality, and many short-lived route candidates, making probing attractive. Bots may pass venue parameters through calldata, use generic scanning logic, or deploy cloned contracts specialized to new route families \cite{ozan_lioba_paper}. As the routes become easier to monitor and parameterize off-chain, activity shifts toward on-chain evaluation and off-chain discovery, while most short-lived on-chain discovery entrants exit.  



\parhead{Spam burden.}
The entrant wave also generates a temporary spam shock: on-chain discovery bots produce 147M spam transactions during the surge window (cf. \Cref{fig:arb_spam_timeseries}C). Most of the level comes from short-lived entrants. The 464 transient bots generate 105.7M spam transactions, or 72\% of on-chain discovery spam. The remaining 40 non-transient bots generate 41.3M spam transactions, but are far more intensive on a per-bot basis and account for a larger arbitrage activity per bot (403,464 arbitrages across 40 bots, versus 185,997 across 464 transient bots). Thus, the surge combines a mass-entry spam shock from transient bots with a smaller persistent core of high-intensity scanners.

\parhead{Summary.}
The September surge provides an example of the model's opportunity-shock channel (cf. \Cref{prop:launch}). On-chain discovery briefly revives around two newly launched token families in an otherwise stable post-Flashblocks regime, then recedes as launch-specific activity fades and the remaining flow is increasingly handled by other architectures. The associated spam surge comes mostly from short-lived triers that exit afterward.
Since the episode overlaps the September 17 per-transaction gas-limit change \cite{baseconfigchange}, we do not interpret it as a clean event study of that policy. A later November rebound contrasts with September: volatility in already active token families mainly expands off-chain discovery, while a smaller on-chain discovery response still generates a visible spam increase (cf. \Cref{appendix:sep_surge_result}).

\subsection{Minimum base fee escalation}
\label{sec:fee_escalation}

Base introduced a minimum base fee on December 4, 2025 and increased it in five steps: 200K WEI on December 4, 500K WEI on December 18, 1M WEI on January 22, 2M WEI on February 2, and 5M WEI on February 19. In the model, the fee floor operates through the cost channel: architectures with greater fee exposure per captured opportunity become less viable as the fee floor rises. We expect the fee ramp to disproportionately affect on-chain discovery, where repeated probing and low success rates make the value-cost ratio especially fee-sensitive (cf. \Cref{prop:fee_floor,prop:fee_exposure_selection,rem:fee_exposure_decomposition}). Because the fee changes are tightly spaced, we analyze the overall weekly population shift and use one-week local comparisons around each step to study selection on value-cost economics.

\parhead{Population shift.}
The fee ramp coincides with a gradual reallocation toward off-chain discovery. \Cref{fig:bot_count_arb_share_timeseries}B shows on-chain discovery's active-bot share falling steadily across the ramp, while off-chain discovery rises, especially from late January 2026 onward. By the final sample week, on-chain discovery accounts for 8.4\% of active bots and off-chain discovery for 71.1\%, with on-chain evaluation comparatively stable. The shift is mainly selection and replacement rather than label switching: only three surviving bots switch labels during the fee escalation regime. The remaining on-chain discovery bots still capture non-trivial arbitrage flow, indicating that a smaller high-productivity core remains viable.

\begin{figure}[t]
\centering
\includegraphics[width=\linewidth]{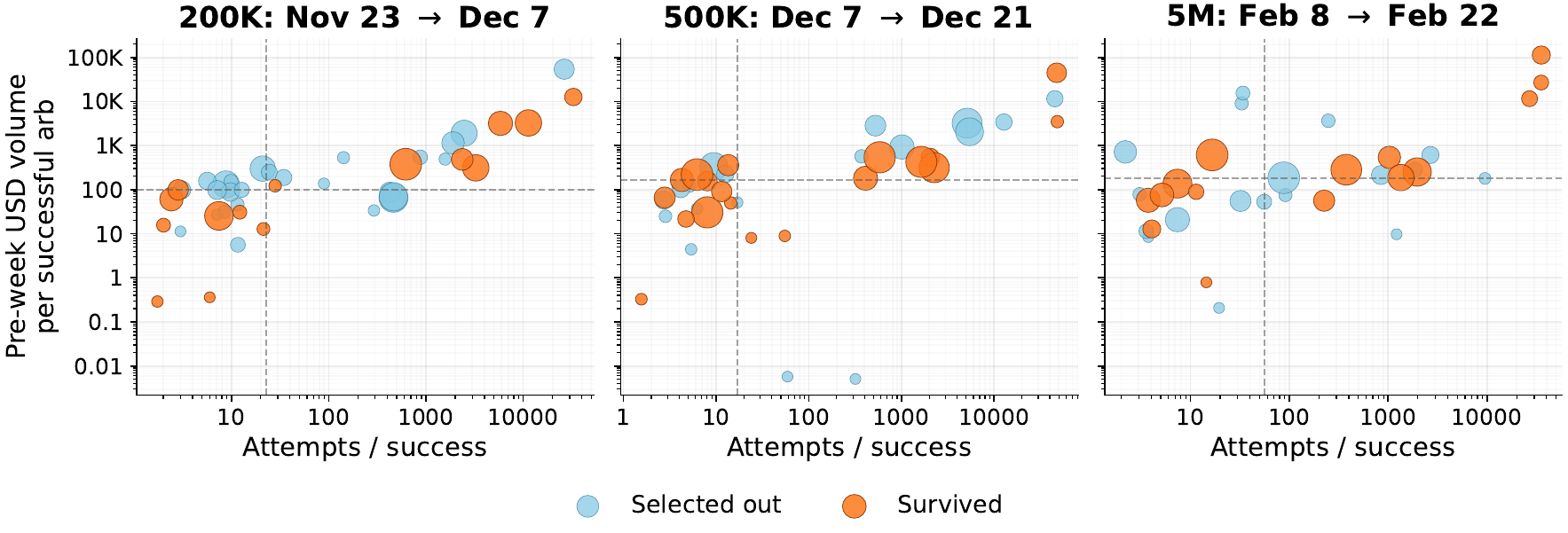}
\caption{On-chain discovery bots active in the pre-step window for each fee increase, plotted by attempts per successful arbitrage and USD volume per successful arbitrage on log scales. Color indicates whether the bot remains active in the post-step window (orange) or is absent (blue). Marker size scales with total volume. The vertical and horizontal dashed lines mark cohort medians. Leftward indicates lower attempt intensity; upward indicates higher value per success, so the upper-left region corresponds to stronger value-cost margins.}
\label{fig:fee_step_scatter}
\end{figure}

\parhead{Selection on value-cost margin.}
\Cref{rem:fee_exposure_decomposition} predicts that fee floors raise the hurdle rate for on-chain discovery participation: bots whose opportunity access does not justify repeated attempts at higher cost should exit, regardless of their attempt intensity in isolation. \Cref{fig:fee_step_scatter} plots pre-step on-chain discovery bots in attempts per success and USD volume per success at the 200K, 500K, and 5M fee steps, with marker size proportional to each bot's total pre-step arbitrage volume. Volume per success proxies for the value side of the value-cost ratio: large-size successful arbitrages imply more gross opportunity value available to cover failed-attempt costs. We discuss the 1M and 2M steps separately because they overlap a mid-ramp opportunity shock (cf. \Cref{appendix:fee_result}).

The selection pattern is two-dimensional: neither attempt intensity nor volume per success alone predicts survival, with both surviving and selected-out bots present across most regions of each panel. Bots with high volume per success and low attempts per success tend to survive: they generate value without incurring high failed-attempt costs. Bots with low volume per success and high attempts per success are disproportionately selected out because their value-cost ratio is unfavorable, and the fee floor pushes them below viability. Bots with both high attempts and high volume per success survive only when their successful opportunities are valuable enough to offset the cumulative cost of repeated attempts; in the 5M panel, this region is anchored by a small cluster of markers operating at \$10,000--100,000 per successful arbitrage. Thus, the fee floor does not select against attempt intensity alone; it selects against attempt intensity unsupported by sufficient opportunity value.

A cohort-level value-per-attempt comparison before and after each fee step, reported in \Cref{appendix:fee_result}, \Cref{tab:fee_step_value_per_attempt}, supports this interpretation. At the clean 200K and 5M steps, surviving bots' aggregate USD volume per attempt rises from 0.36 to 0.87 and from 0.46 to 1.31, respectively, consistent with pruning low-yield attempts or shifting toward higher-value opportunities. The 500K step moves in the opposite direction, but this window coincides with a broad market contraction in late December (cf. \Cref{fig:arb_spam_timeseries}). 

\parhead{Stability in per-attempt footprint.}
\Cref{tab:fee_ramp_endpoint_trace_features} compares bot-level and transaction-level metrics for on-chain discovery bots at the pre-ramp endpoint, the week beginning November 24, 2025, and the post-ramp endpoint, the week beginning February 23, 2026. Unlike Flashblocks, the fee ramp does not appear to select for less scan-intensive on-chain discovery attempts at the trace-level. Therefore, broad scanning can still be profitable when opportunities are valuable enough, but low-yield probing becomes harder to sustain.
\begin{table}[t]
\centering
\resizebox{\linewidth}{!}{%
\begin{tabular}{@{}lrrrrrrr@{}}
\toprule
& & \multicolumn{3}{c}{Bot-level architecture}
& \multicolumn{3}{c}{Transaction-level footprint} \\
\cmidrule(lr){3-5}\cmidrule(l){6-8}
Period & Bots
& \makecell{Median \\ calldata length}
& \makecell{Average venue \\ commitment}
& \makecell{Median venue \\ commitment}
& \makecell{Median \\ pre-reads}
& \makecell{Median \\ arbitrage tx gas}
& \makecell{Median avg.\\ spam tx gas} \\
\midrule
Pre-ramp endpoint & 42 & 371 & 0.288 & 0.347 & 14 & 471K & 73K \\
Post-ramp endpoint & 35 & 304 & 0.278 & 0.321 & 21 & 497K & 72K \\
\bottomrule
\end{tabular}
}
\caption{Endpoint on-chain discovery architecture around the minimum-base-fee ramp. Bot-level architecture metrics first summarize each bot within the endpoint week and then summarize across bots. Transaction-level footprint metrics summarize sampled arbitrage transactions and spam bot-week medians in the same endpoint weeks.}
\label{tab:fee_ramp_endpoint_trace_features}
\end{table}


\parhead{Summary.}
These results support the model's cost-channel in \Cref{prop:fee_floor,prop:fee_exposure_selection}. As fee floors raise per-attempt costs, on-chain discovery, the highest-exposure architecture, contracts from 22.0\% to 8.4\% of active bots. The selection is against repeated attempts unsupported by sufficient opportunity value, while broad scanning remains viable when opportunity value is high. Thus, higher fee floors shift the on-chain discovery viability frontier along the value-cost margin, distinct from Flashblocks' blockspace channel, which selects against heavy venue scanning in early ordering slices.

\subsection{Protocol revenue and blockspace efficiency}
\label{sec:chain_efficiency}

The previous sections documented how bot architectures responded to Base's configuration changes. We now ask whether these responses improved chain-level outcomes, including protocol fee revenue composition and blockspace consumption.\footnote{We note that the base fees on Base accrue to the sequencer rather than being burned as on Ethereum. Our analysis isolates the priority fee component of broader sequencer revenue.}
These outcomes speak to both protocol revenue quality and the blockspace externality caused by spam.


\parhead{Priority fees by architecture.}
Across the full sample, off-chain discovery successful arbitrages pay the highest median priority fee per transaction, at \SI{8.66e-7} ETH, compared with \SI{4.72e-7} ETH for on-chain evaluation and \SI{5.02e-7} ETH for on-chain discovery. This ordering is consistent with the economic role of off-chain route commitment. Because these bots can simulate routes and optimize input sizes before submitting, they are better positioned to identify higher-value opportunities and bid more aggressively for prioritized inclusion. Thus, the higher priority fees likely reflect both better opportunities and competition for more valuable arbitrages.  

\begin{figure}[t]
    \centering
    \includegraphics[width=\linewidth]{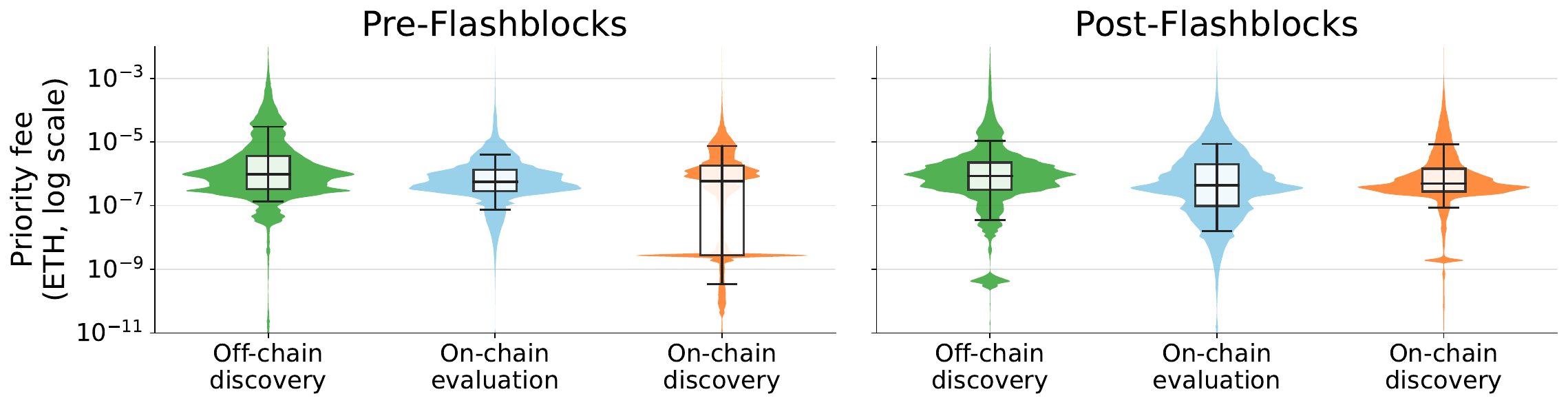}
    \caption{Priority-fee distributions for successful arbitrages by architecture before and after Flashblocks (3.04M arbitrages pre, 17.40M arbitrages post).}
    \label{fig:prio_fee_arb_distribution}
\end{figure}

\Cref{fig:prio_fee_arb_distribution} plots priority-fee distributions for successful arbitrages before and after Flashblocks by architecture. We note that the pre-Flashblocks window contains 5 weeks, while the post-Flashblocks window contains 33 weeks. The largest distributional change occurs for on-chain discovery. Before Flashblocks, successful on-chain discovery arbitrages exhibit a pronounced low-priority-fee mass, which we interpret as low-willingness-to-pay probabilistic search. After Flashblocks, this low-fee mass shrinks, and the on-chain discovery distribution moves closer to the other architectures. This pattern is consistent with Flashblocks selecting out the broadest low-revenue scanners and leaving a smaller population whose attempts are more targeted (cf. \Cref{tab:flashblocks_onchain_cohort}).
The same qualitative pattern appears on the spam side, although spam is observed only at the bot-week aggregate level (cf. \Cref{appendix:protocol_result}, \Cref{fig:prio_fee_spam_dist}). 

Taken together, the priority-fee evidence suggests that Flashblocks changes not only which bots remain active, but also how surviving on-chain discovery bots bid for inclusion. The low-priority-fee mass of on-chain discovery shrinks after Flashblocks, and the remaining on-chain discovery transactions pay priority fees closer to the other architectures. Thus, the post-Flashblocks on-chain discovery population is smaller but bids more aggressively.

A similar, though less clean, pattern appears across the minimum-base-fee ramp. Median priority fees for successful on-chain discovery arbitrages rise from \SI{4.58e-7} ETH in the pre-ramp endpoint week to \SI{2.69e-6} ETH in the post-ramp endpoint week, consistent with low-value (and therefore low-fee) attempts becoming less viable after repeated attempts become more costly (cf. \Cref{appendix:protocol_result}, \Cref{fig:prio_fee_week26_vs_week39}). Because we observe only one post-ramp week, we interpret it cautiously.


\parhead{Protocol revenue composition.}
We next aggregate priority fees across arbitrage attempts to examine protocol revenue. \Cref{fig:priority_fee_composition} decomposes cumulative and weekly priority fees by architecture and attempt outcome. 
\begin{figure}[t]
    \centering
    \includegraphics[width=1\linewidth]{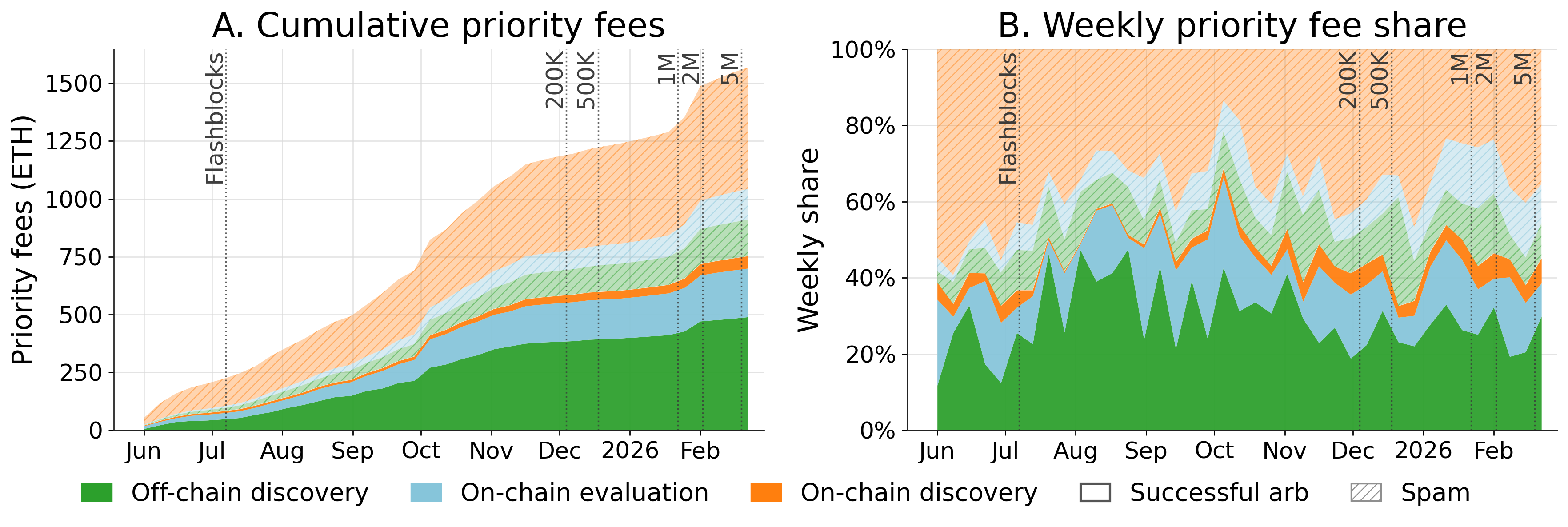}
    \caption{Priority-fee payments by bot architecture and transaction outcome. Panel A reports cumulative priority fees from arbitrage attempts. Panel B reports weekly priority-fee shares. Colors denote bot architecture; solid areas denote successful arbitrages and hatched areas denote spam.}
    \label{fig:priority_fee_composition}
\end{figure}
The level series in \Cref{fig:priority_fee_composition}A is highly opportunity-driven: protocol revenue accumulates faster during high-volume periods, especially around the October and late-January episodes (cf. \Cref{fig:arb_spam_timeseries}B).
For off-chain discovery and on-chain evaluation bots, cumulative priority fees come predominantly from successful arbitrages despite spam transactions being roughly three times more frequent (cf. \Cref{tab:bot_label_activity}). This reflects that their high-value successful captures pay large tips and account for a disproportionate share of cumulative fees.
For on-chain discovery, by contrast, spam contributes a large share of cumulative priority fees because the architecture generates so many failed attempts.

\Cref{fig:priority_fee_composition}B shows a post-Flashblocks shift in arbitrage-related priority-fee revenue away from spam and toward successful arbitrages. Before Flashblocks, spam accounts for more than 60\% of weekly priority fees; afterward, it falls to roughly 40--45\%, even during the September entrant surge when spam transactions nearly double relative to pre-Flashblocks weeks (cf. \Cref{fig:arb_spam_timeseries}C). The successful-arbitrage share is correspondingly higher after Flashblocks, but remains opportunity-driven: it rises during high-volume periods, especially around the October episode, and declines as aggregate arbitrage volume falls (cf. \Cref{fig:arb_spam_timeseries}B). Thus, spam continues to contribute to nontrivial protocol revenue, but it becomes a less dominant component after the ordering granularity changes.



The fee-ramp period is harder to interpret because policy changes overlap with low-activity periods and a mid-ramp opportunity shock (cf. \cref{fig:arb_spam_timeseries}B). Even so, the successful-arbitrage share rises following the 500K step, despite volume contraction, consistent with selection against low-value, high-attempt-intensity bots. The pattern becomes less clear around the 1M and 2M steps, when the opportunity shock temporarily revives on-chain discovery.

\parhead{Blockspace efficiency.}
Despite protocol revenue contributions, failed attempts impose a burden on blockspace. \Cref{tab:gas_share} documents the gas usage by architecture as a share of total Base gas over the sample period. Although on-chain discovery bots only capture roughly half of successful arbitrages of the other two architectures (cf. \Cref{tab:bot_label_activity}), their arbitrages consume comparable blockspace because they are heavier (cf. \Cref{fig:gas_used_cdf}). Successful arbitrage transactions across all architectures consume only 1.31\% of total gas on Base. The blockspace burden is instead driven almost entirely by spam, especially on-chain discovery spam, which accounts for roughly 20\% of total Base gas.


\begin{table}[t]
\centering
\small
\setlength{\tabcolsep}{4pt}
\begin{tabular}{lccc}
\toprule
Architecture
& \makecell{Arbitrage gas share}
& \makecell{Spam gas share}
& \makecell{Total gas share} \\
\midrule
Off-chain discovery & 0.47\% & 0.99\% & 1.45\% \\
On-chain evaluation  & 0.41\% & 0.34\% & 0.75\% \\
On-chain discovery  & 0.44\% & 19.59\% & 20.04\% \\
\midrule
All bots        & 1.31\% & 20.92\% & 22.23\% \\
\bottomrule
\end{tabular}%
\caption{Successful arbitrage, spam, and total gas usage share out of total Base gas by architecture.}
\label{tab:gas_share}
\end{table}

\begin{figure}[t]
\centering
\includegraphics[
  width=0.8\linewidth
]{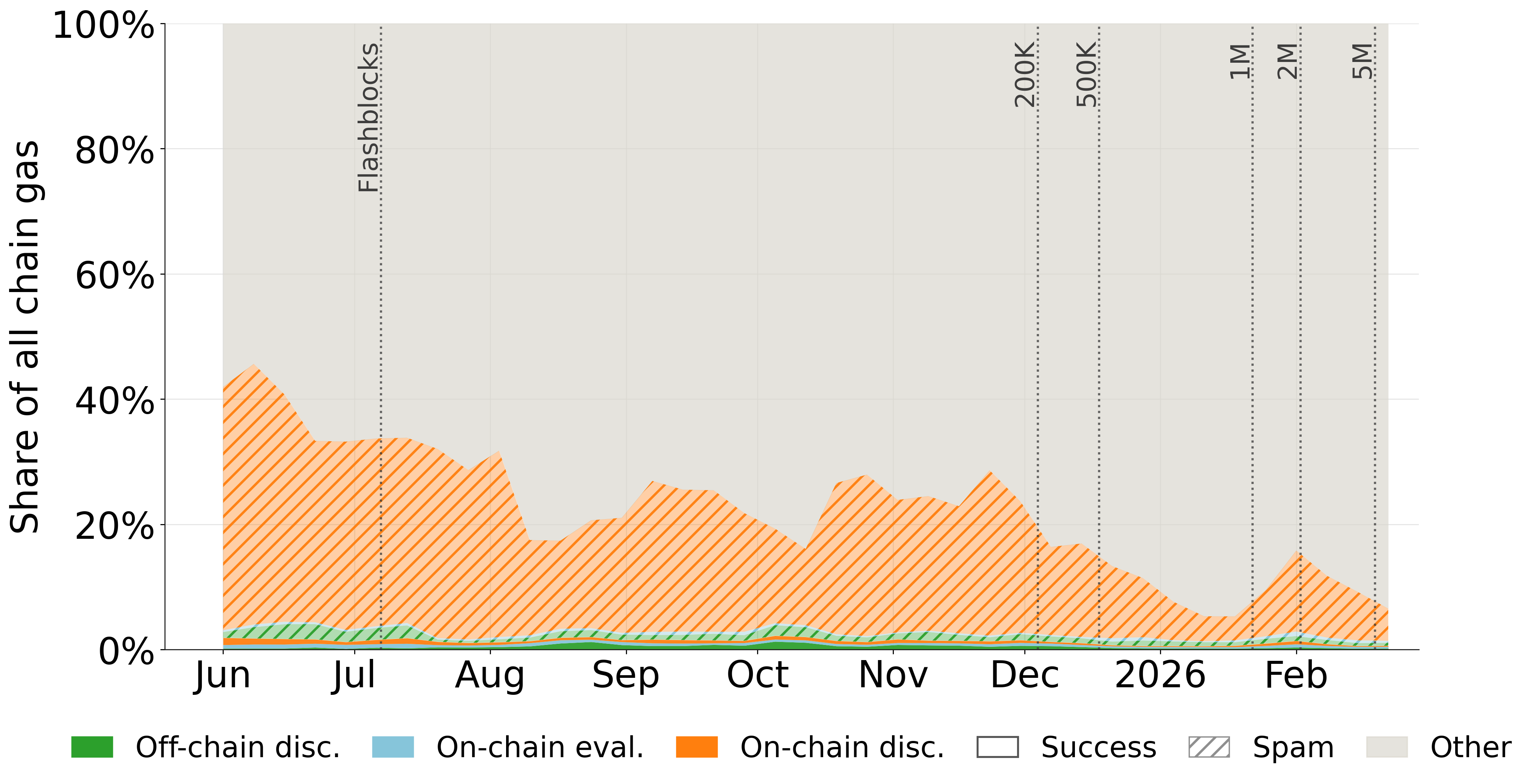}
\caption{Weekly arbitrage and spam gas share out of total Base gas by architecture.}
\label{fig:weekly_gas_share}
\end{figure}

\Cref{fig:weekly_gas_share} decomposes weekly gas usage into arbitrage-related gas and other chain activity. 
Before Flashblocks, on-chain discovery spam alone consumes roughly one-third to nearly one-half weekly gas. Flashblocks reduce this burden, although not immediately. As shown in \Cref{sec:flashblocks}, surviving and entering on-chain discovery bots become more attempt-intensive in the four-week post window. The gas share falls more clearly after that window to 15.2\%, as a few extremely intensive bots disappear or downscale (cf. \Cref{appendix:protocol_result}).

Later episodes show the same distinction between architecture selection and opportunity shocks. During the September surge and the November rebound period, spam transaction counts rise substantially, but spam does not return to its early-sample blockspace share. Through the fee ramp, spam gas share declines further, consistent with higher fees making low-yield probing harder to sustain. We interpret this cautiously because December and early January are lower-activity periods market-wide (cf. \Cref{fig:arb_spam_timeseries}B), and the late-January opportunity shock temporarily raises spam gas even under a higher fee floor.



Taken together, Base's configuration changes improve blockspace efficiency in a narrow but important sense. Spam continues to occur, while its share of chain gas falls after Flashblocks and declines further through the fee-ramp period, aside from temporary opportunity shocks.

\section{Concluding Discussion}
Our paper shows that arbitrage competition on high-throughput blockchains involves economically distinct search architectures with different responses to protocol design. A reduced-form model studies how ordering granularity, fee floors, and opportunity-access shocks affect the relative viability of targeted and probabilistic search. Empirical evidence from Base supports these comparative statics and traces how protocol changes reshape bot composition, execution behavior, priority-fee payments, and blockspace consumption.

The broader lesson is that protocol design affects not only how much arbitrage MEV is extracted, but also how it is pursued on-chain. Failed attempts are not merely isolated transaction-level inefficiencies; they arise from equilibrium search behavior under incomplete information, low attempt costs, and specific ordering environments. Opportunity shocks further show that protocol changes do not mechanically guarantee permanent reductions in failed attempts. 
The externality these attempts impose is blockspace congestion that displaces other users.
Understanding how search architectures emerge across protocol designs is important for reducing externalities without suppressing useful economic activity such as arbitrage. We conclude by outlining several open questions for future work.


\noindent\textbf{Does the search-architecture framework apply to other forms of MEV?}
Our analysis focuses on cyclic arbitrage, where the route discovery is visible in transaction traces. Other MEV forms may admit the same dichotomy. Liquidations can be targeted, when searchers wait for oracle updates against known positions, or probabilistic, when they repeatedly probe for liquidatable positions. Cross-domain arbitrage may similarly vary in how information is gathered and how transactions are submitted across domains. Future work could test whether these MEV forms exhibit analogous search architectures, how they respond to protocol changes, and what externalities they impose.

\noindent\textbf{Can the model's predictions be tested on other blockchains?}
A natural next step is to compare chains with different execution environments, fees, and information availability. Ethereum likely favors targeted search because failed attempts are costly, and mempool visibility enables off-chain simulation. Arbitrum may also favor targeted search, but for a different reason: its fast block time 
reduces stale-state risk, consistent with evidence that Arbitrum bots appear less probe-like than bots on Base and Optimism~\cite{ozan_lioba_paper}. Optimism is a closer Base comparison because both use the OP Stack~\cite{optimism2026opstack} and share similar ordering and configuration choices, including Flashblocks. Outside the Ethereum ecosystem, Solana offers a useful contrast as a high-throughput chain with active probabilistic searching~\cite{chen2023mevonsolana} and a different information environment, where transactions are forwarded directly to validators who may share pending-transaction information with searchers.

\noindent\textbf{How would priority-access mechanisms affect architecture competition?}
Our framework can also be extended to ordering mechanisms that allocate priority explicitly. Such mechanisms change the margin of competition from simply reaching the earliest ordering unit to deciding whether priority access is worth paying for and how aggressively to use it.
Arbitrum's Timeboost~\cite{arbitrum2026timeboostintro}, which sells priority access, tests one such margin: priority access may favor targeted searchers that can forecast and monetize latency advantages, while leaving probabilistic search viable only for operators with enough opportunity flow to justify the cost. Existing evidence that express-lane access is concentrated and that roughly 22\% of time-boosted transactions revert suggests that priority access alone does not eliminate spam~\cite{christof_timeboost}. Optimism's recent experiment with stake-based priority ordering~\cite{optimism2026stakebasedpriority} provides another useful setting for studying how stake requirements, first-come-first-served ordering, and priority-fee multipliers affect the relative viability of targeted and probabilistic search.

\noindent\textbf{How should spam mitigation differ across architectures?}
Failed attempts from different architectures call for different mitigation tools. Revert protection can reduce failed route-committed attempts~\cite{flashbots2025protectoverview,uniswap2025rollupboostunichain}, but it does not eliminate early-terminating targeted profitability checks or probabilistic probe spam, where transactions themselves are used to discover opportunities. Reducing probe-like spam requires improving information availability or reducing the incentive to use transactions as information-gathering devices, for example, through richer real-time information channels~\cite{quintus2024spamsearching}, protocol-run backrunning~\cite{osmosis2026protorev} and privacy-preserving backrunning auctions~\cite{flashbots2025mevshareintroduction}, and charging based on reserved execution gas~\cite{wang2026blockspacepressureanalysisspam}. On high-throughput chains, these mechanisms face a tight tradeoff between information freshness, auction efficiency, latency, and user protection.

\bibliography{reference}
\newpage
\appendix
\crefalias{section}{appendix}
\crefalias{subsection}{appendix}
\crefalias{subsubsection}{appendix} 
\section{Model Proofs and Regularity Conditions}
\label{sec:proofs}

This section provides the formal derivations omitted from \Cref{sec:model}. We first prove the within-regime equilibrium and comparative statics, then formalize the scan-footprint, fee-exposure, and opportunity-access extensions.

\subsection{Equilibrium proofs}
\begin{proof}[Proof of \Cref{prop:regime_eq}]
Fix a regime $r$ and suppress the dependence on $b$ by writing
\[
\eta := \eta_r, \qquad \kappa_O := \kappa_O(b), \qquad \kappa_C := \kappa_C(b).
\]

Suppose first that both types are active. Then the zero-profit condition for the on-chain architecture is
\[
\Pi_C^r(m_O,m_C;b)=0
\quad\Longrightarrow\quad
\eta\,\frac{\lambda_C V}{m_O+m_C}=K_C(b).
\]
Using the definition $\kappa_C=\lambda_C V/K_C(b)$, this implies
\[
m_O+m_C=\eta \kappa_C.
\]
Substituting this into the zero-profit condition for the off-chain architecture gives
\[
\eta\,\frac{\lambda_O V}{\eta \kappa_C}
+
(1-\eta)\,\frac{\lambda_O V}{m_O}
=
K_O(b).
\]
Since $\lambda_O V=\kappa_O K_O(b)$, we obtain
\[
\frac{\kappa_O K_O(b)}{\kappa_C}
+
(1-\eta)\,\frac{\kappa_O K_O(b)}{m_O}
=
K_O(b).
\]
Dividing by $K_O(b)$ and rearranging yields
\[
m_O
=
\frac{(1-\eta)\kappa_O}{1-\kappa_O/\kappa_C}.
\]
Because $\kappa_C>\kappa_O$, the denominator is strictly positive. Then
\[
m_C
=
\eta \kappa_C - m_O
=
\frac{\eta \kappa_C - \kappa_O}{1-\kappa_O/\kappa_C}.
\]
Hence the coexistence solution is feasible if and only if
\[
\eta \kappa_C > \kappa_O.
\]
Whenever this condition holds, the pair \((m_O,m_C)\) is pinned down uniquely, so the coexistence equilibrium on the branch with active off-chain discovery is unique.

Now consider the off-chain-only equilibrium. The off-chain architecture has zero profit because
\[
\Pi_O^r(\kappa_O,0;b)
=
\frac{\lambda_O V}{\kappa_O}-K_O(b)
=
K_O(b)-K_O(b)
=
0.
\]
At this profile, the on-chain architecture's payoff is
\[
\Pi_C^r(\kappa_O,0;b)
=
\eta\,\frac{\lambda_C V}{\kappa_O}-K_C(b)
=
K_C(b)\Bigl(\eta \frac{\kappa_C}{\kappa_O}-1\Bigr).
\]
Therefore the off-chain-only equilibrium is an equilibrium if and only if
\[
\eta \kappa_C \le \kappa_O.
\]
This proves both parts.
\end{proof}

\subsection{Blockspace-competitiveness channel}
\label{app:model_fb}
\begin{proof}[Proof of \Cref{thm:fb_competitiveness}]

Fix $b$ and suppress the dependence on $b$ by writing $\kappa_O=\kappa_O(b)$ and $\kappa_C=\kappa_C(b)$.
By \Cref{prop:regime_eq}, in any coexistence regime with parameter $\eta$,
\[
m_O^\ast(\eta)
=
\frac{(1-\eta)\kappa_O}{1-\kappa_O/\kappa_C},
\qquad
m_C^\ast(\eta)
=
\frac{\eta \kappa_C-\kappa_O}{1-\kappa_O/\kappa_C}.
\]
Because $\kappa_C>\kappa_O$, the denominator $1-\kappa_O/\kappa_C$ is strictly positive.

Now observe that
\[
\frac{\partial m_C^\ast(\eta)}{\partial \eta}
=
\frac{\kappa_C}{1-\kappa_O/\kappa_C}
>0,
\qquad
\frac{\partial m_O^\ast(\eta)}{\partial \eta}
=
-\frac{\kappa_O}{1-\kappa_O/\kappa_C}
<0.
\]
Hence \(m_C^\ast(\eta)\) is strictly increasing in \(\eta\), while \(m_O^\ast(\eta)\) is strictly decreasing in \(\eta\).

Since \(\eta_F<\eta_B\), it follows immediately that
\[
m_C^{\ast,F}(b)=m_C^\ast(\eta_F)<m_C^\ast(\eta_B)=m_C^{\ast,B}(b),
\]
and
\[
m_O^{\ast,F}(b)=m_O^\ast(\eta_F)>m_O^\ast(\eta_B)=m_O^{\ast,B}(b).
\]
This proves the result.
\end{proof}

\parhead{Selection on per-attempt scan footprint.}
Let on-chain-discovery architectures be indexed by an on-chain scan-footprint parameter $x$, where a larger $x$ means a more scan-intensive subtype with broader venue comparison or heavier pre-execution evaluation per attempt. For an on-chain subtype of scan-footprint $x$, define
\[
\kappa_C(x,b):=\frac{\lambda_C(x)V}{K_C(x,b)},
\]
where $\lambda_C(x)$ and $K_C(x,b)$ are the opportunity access rate and the per-opportunity cost for a subtype with footprint $x$, respectively.
Its regime-$r$ effective competitiveness is
\[
\Gamma_r(x,b):=\eta_r(x)\kappa_C(x,b),
\qquad r\in\{B,F\}.
\]

To compare on-chain discovery to the off-chain benchmark, define the relative competitiveness index
\[
R_r(x,b):=\frac{\Gamma_r(x,b)}{\kappa_O(b)}
=
\eta_r(x)\frac{\kappa_C(x,b)}{\kappa_O(b)}.
\]
An on-chain subtype is viable in regime $r$ if and only if
\[
R_r(x,b)>1.
\]
We use the weak-viability convention $R_r(x,b)\ge 1$ at the boundary; this convention is immaterial when the subtype distribution has no atom at $x_r^\ast(b)$.

We assume $R_r(x,b)$ is continuous and strictly decreasing in $x$: additional opportunity access from a broader scan footprint is not enough to offset lower same-slice competitiveness and higher cost. We also assume $R_F(x,b)<R_B(x,b)$ for all $x,b$, so finer ordering weakly lowers the competitiveness of each on-chain subtype.

Let the viability cutoff $x_r^\ast(b)$ be defined implicitly by
\[
R_r(x_r^\ast(b),b)=1.
\]
Then on-chain subtypes with
\[
x\le x_r^\ast(b)
\]
are viable in regime $r$, while more scan-intensive subtypes are not.

\begin{proof}[Proof of \Cref{prop:selection_on_complexity}]
By construction, an on-chain subtype is viable in regime $r$ if and only if
\[
R_r(x,b)>1.
\]
Since $R_r(x,b)$ is strictly decreasing in $x$, the set of viable on-chain subtypes is an interval of the form
\[
\{x:x\le x_r^\ast(b)\},
\]
where $x_r^\ast(b)$ is defined by
\[
R_r(x_r^\ast(b),b)=1.
\]
This proves part (1).

For part (2), fix $b$. Since
\[
R_F(x,b)<R_B(x,b)
\qquad\text{for all }x,
\]
the graph of $R_F(\cdot,b)$ lies everywhere below the graph of $R_B(\cdot,b)$. Therefore, the solution to
\[
R_F(x,b)=1
\]
occurs at a strictly lower value of $x$ than the solution to
\[
R_B(x,b)=1.
\]
Thus,
\[
x_F^\ast(b)<x_B^\ast(b).
\]
\end{proof}

\begin{corollary}[Finer ordering selects toward less scan-intensive survivors]
\label{cor:simpler_survivors}
Under the conditions of \Cref{prop:selection_on_complexity}, let $X$ be drawn from a fixed distribution over candidate on-chain subtypes, independent of the ordering regime, and suppose
\[
\Pr[X\le x_F^\ast(b)]>0.
\]
Let $z(X)$ be any strictly increasing observable scan-footprint measure, such as gas usage per attempt or the number of pre-swap state reads. Then
\[
\mathbb{E}[z(X)\mid X\le x_F^\ast(b)]
\le
\mathbb{E}[z(X)\mid X\le x_B^\ast(b)],
\]
with strict inequality whenever the distribution places positive mass on
\[
(x_F^\ast(b),x_B^\ast(b)].
\]
Thus, finer ordering shifts the surviving viable on-chain-discovery population toward a lower observable scan footprint.
\end{corollary}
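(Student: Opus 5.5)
We reduce the corollary to a monotonicity statement about truncated conditional expectations. By \Cref{prop:selection_on_complexity}, the viable set in regime $r$ is $\{x : x \le x_r^\ast(b)\}$ and $x_F^\ast(b) < x_B^\ast(b)$. Write $a := x_F^\ast(b)$ and $c := x_B^\ast(b)$, and set $p := \Pr[X\le a] > 0$ and $q := \Pr[a < X \le c] \ge 0$. The conditioning events are nested, $\{X\le a\}\subseteq\{X\le c\}$, and both have positive probability. We will implicitly assume $z(X)$ is integrable on these events; if not, we restrict to bounded $z$ or note that the inequality holds in the extended sense.

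The key step is a mixture decomposition. We have
\[
\mathbb{E}[z(X)\mid X\le c] = \frac{p}{p+q}\,\mathbb{E}[z(X)\mid X\le a] + \frac{q}{p+q}\,\mathbb{E}[z(X)\mid a<X\le c],
\]
where the second term is dropped when $q=0$. If $q=0$, the two conditional expectations coincide and the weak inequality holds with equality.

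If $q>0$, we use strict monotonicity of $z$. On $\{X\le a\}$ we have $z(X)\le z(a)$, so $\mathbb{E}[z(X)\mid X\le a]\le z(a)$. On $\{a<X\le c\}$ we have $z(X)>z(a)$ pointwise, and since this event has positive probability, $\mathbb{E}[z(X)\mid a<X\le c]>z(a)$. The conditional expectation over the larger truncation is therefore a convex combination with positive weight on a strictly larger value. This gives $\mathbb{E}[z(X)\mid X\le c] > \mathbb{E}[z(X)\mid X\le a]$, which is the claimed strict inequality whenever the distribution puts positive mass on $(x_F^\ast(b),x_B^\ast(b)]$.

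The main obstacle is justifying the strict inequality $\mathbb{E}[z(X)\mid a<X\le c]>z(a)$. It follows because the expectation of a random variable that is strictly positive almost surely on an event of positive probability is itself strictly positive; we apply this to $z(X)-z(a)$. The boundary convention at $x_r^\ast(b)$ (weak viability) only matters through atoms, and we handle it by using the closed intervals exactly as stated. Finally, independence of the distribution of $X$ from the regime ensures that the same law is truncated in both regimes, so the comparison is legitimate.
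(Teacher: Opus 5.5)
Your proof is correct, but it takes a different route from the paper's.

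The paper argues through first-order stochastic dominance. It notes that the finer-ordering viable set is a truncation from above of the baseline one, and compares conditional CDFs pointwise: $F_X(t)/F_X(x_F^\ast(b)) \ge F_X(t)/F_X(x_B^\ast(b))$ for $t \le x_F^\ast(b)$. So the truncated law under finer ordering is stochastically smaller. It then invokes the standard fact that expectations of increasing functions respect this order. The strict inequality under positive mass on $(x_F^\ast(b),x_B^\ast(b)]$ is only asserted there, not argued.

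Your mixture decomposition writes $\mathbb{E}[z(X)\mid X\le c]$ as a convex combination of the smaller truncation and the added slice $\{a<X\le c\}$. You then use $z(a)$ as a separator: $\mathbb{E}[z(X)\mid X\le a]\le z(a)<\mathbb{E}[z(X)\mid a<X\le c]$. This is more elementary. It also gives the strict inequality with an explicit justification, and makes clear that strict monotonicity of $z$ is needed only for the strict part, while weak monotonicity suffices for the weak part.

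What the stochastic-dominance route buys is a distribution-level conclusion. The whole survivor law shifts down, so every quantile of $z(X)$ weakly falls, not just its mean. That fits the median-based footprint measures (median pre-reads, median gas) used in the Flashblocks tables better than a statement about expectations.

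Your integrability caveat is a point the paper glosses over. It is harmless for the nonnegative footprint measures in question: the lower truncation has mean bounded by $z(a)$, and an infinite conditional mean on the added slice only strengthens the strict inequality in the extended sense.
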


\begin{proof}[Proof of \Cref{cor:simpler_survivors}]
Under Proposition~\ref{prop:selection_on_complexity}, the viable on-chain set in regime $r$ is
\[
\{x:x\le x_r^\ast(b)\}.
\]
Since
\[
x_F^\ast(b)<x_B^\ast(b),
\]
the viable set under finer ordering is a truncation from above of the viable set under full-block ordering.

Let $F_X$ denote the distribution of $X$. For any $t\le x_F^\ast(b)$,
\[
\Pr[X\le t\mid X\le x_F^\ast(b)]
=
\frac{F_X(t)}{F_X(x_F^\ast(b))}
\ge
\frac{F_X(t)}{F_X(x_B^\ast(b))}
=
\Pr[X\le t\mid X\le x_B^\ast(b)].
\]
Thus, the conditional distribution under finer ordering places more mass on lower-complexity subtypes. It first-order stochastically dominates the full-block conditional distribution in the direction of lower complexity.

Therefore, for any strictly increasing function $z(\cdot)$,
\[
\mathbb{E}[z(X)\mid X\le x_F^\ast(b)]
\le
\mathbb{E}[z(X)\mid X\le x_B^\ast(b)].
\]
The inequality is strict whenever the distribution places positive mass on
\[
(x_F^\ast(b),x_B^\ast(b)].
\]
\end{proof}

\subsection{Cost channel}
\label{app:model_fee}
\begin{proof}[Proof of \Cref{prop:fee_floor}]
By definition,
\[
\frac{\kappa_C(b)}{\kappa_O(b)}
=
\frac{\lambda_C V/K_C(b)}{\lambda_O V/K_O(b)}
=
\frac{\lambda_C}{\lambda_O}\cdot \frac{K_O(b)}{K_C(b)}.
\]
Taking logs gives
\[
\log \frac{\kappa_C(b)}{\kappa_O(b)}
=
\log\frac{\lambda_C}{\lambda_O}
+
\log K_O(b)
-
\log K_C(b).
\]
Differentiating with respect to $b$,
\[
\frac{d}{db}
\log \frac{\kappa_C(b)}{\kappa_O(b)}
=
\frac{K_O'(b)}{K_O(b)}
-
\frac{K_C'(b)}{K_C(b)}
=
\chi_O(b)-\chi_C(b).
\]
If $\chi_C(b)>\chi_O(b)$, then this derivative is negative, so
\[
\frac{d}{db}
\left(
\frac{\kappa_C(b)}{\kappa_O(b)}
\right)
<0.
\]
The coexistence condition can be written as
\[
\eta_r>\frac{\kappa_O(b)}{\kappa_C(b)}.
\]
Since $\kappa_C(b)/\kappa_O(b)$ falls with $b$, the right-hand side rises with $b$.
Therefore, the condition becomes harder to satisfy.
\end{proof}

The same fee-exposure logic applies within on-chain discovery. For subtype comparisons, the off-chain benchmark $\kappa_O(b)$ cancels from the relative competitiveness ratio, so the result can be stated equivalently in terms of $\Gamma$ or the relative competitiveness index $R$.

\begin{proposition}[Fee floors select against high-exposure on-chain subtypes]
\label{prop:fee_exposure_selection}
Fix an ordering regime $r$ and consider two on-chain discovery subtypes $j$ and $k$ with effective competitiveness
\[
\Gamma_{r,j}(b)=\eta_{r,j}\frac{\lambda_j V}{K_j(b)},
\qquad
\Gamma_{r,k}(b)=\eta_{r,k}\frac{\lambda_k V}{K_k(b)}.
\]
Assume $\eta_{r,j}$, $\eta_{r,k}$, $\lambda_j$, and $\lambda_k$ are fixed with respect to $b$. If
\[
\chi_j(b):=\frac{K_j'(b)}{K_j(b)}
>
\chi_k(b):=\frac{K_k'(b)}{K_k(b)},
\]
then
\[
\frac{d}{db}
\left(
\frac{\Gamma_{r,j}(b)}{\Gamma_{r,k}(b)}
\right)<0.
\]
Thus, as the fee floor rises, subtype $j$ loses relative competitiveness to subtype $k$.
\end{proposition}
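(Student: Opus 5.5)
The plan is to mirror the proof of \Cref{prop:fee_floor}: work with the logarithm of the ratio and differentiate in $b$. Since $\Gamma_{r,j}(b)$ and $\Gamma_{r,k}(b)$ are strictly positive ($\eta_{r,\cdot}\in(0,1)$ on any subtype that competes at all, $\lambda_\cdot V>0$, and $K_\cdot(b)>0$ by the standing assumption on costs), the ratio $\Gamma_{r,j}/\Gamma_{r,k}$ is positive and its logarithm is well defined. The sign of its derivative is the sign of the derivative of the log.

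Writing out the ratio, the common factor $V$ cancels:
\[
\log\frac{\Gamma_{r,j}(b)}{\Gamma_{r,k}(b)}
=
\log\frac{\eta_{r,j}\lambda_j}{\eta_{r,k}\lambda_k}
+\log K_k(b)-\log K_j(b).
\]
The first term does not depend on $b$, by the hypothesis that the same-slice competitiveness and access parameters are fixed in $b$. Differentiating, assuming $K_j,K_k$ are differentiable as is implicit in the definition of $\chi$, gives
\[
\frac{d}{db}\log\frac{\Gamma_{r,j}(b)}{\Gamma_{r,k}(b)}=\chi_k(b)-\chi_j(b)<0 .
\]
Multiplying by the positive ratio then yields
\[
\frac{d}{db}\bigl(\Gamma_{r,j}/\Gamma_{r,k}\bigr)=\frac{\Gamma_{r,j}}{\Gamma_{r,k}}\,\bigl(\chi_k-\chi_j\bigr)<0,
\]
which is the claim.

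To connect this to the remark before the statement, I would note that dividing both $\Gamma$'s by $\kappa_O(b)$ leaves the ratio unchanged. The same inequality therefore holds for $R_{r,j}/R_{r,k}$, so the result is consistent with the viability index $R_r$ of \Cref{app:model_fb}.

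There is no real obstacle here; the argument is a one-line computation. The only point requiring care is making explicit the positivity of $\eta_{r,j}$ and $\eta_{r,k}$ so the logarithm is legitimate. If either were zero, $\Gamma$ would vanish identically and the ratio would be undefined, so I would state that the proposition concerns subtypes with $\eta>0$.
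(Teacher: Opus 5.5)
Your proposal is correct and matches the paper's proof essentially line for line. Like the paper, you take logs of $\Gamma_{r,j}/\Gamma_{r,k}$, drop the $b$-independent $\eta\lambda$ term, differentiate to get $\chi_k(b)-\chi_j(b)<0$, and use positivity of the ratio to transfer the sign. Your explicit requirement that $\eta_{r,j},\eta_{r,k}>0$ is a small but useful tightening, since the model only assumes $\eta_r\in[0,1)$ and the paper leaves this condition implicit.
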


\begin{proof}[Proof of \Cref{prop:fee_exposure_selection}]
Taking logs,
\[
\log \frac{\Gamma_{r,j}(b)}{\Gamma_{r,k}(b)}
=
\log\frac{\eta_{r,j}\lambda_j}{\eta_{r,k}\lambda_k}
+
\log K_k(b)
-
\log K_j(b).
\]
Differentiating with respect to $b$ gives
\[
\frac{d}{db}
\log \frac{\Gamma_{r,j}(b)}{\Gamma_{r,k}(b)}
=
\chi_k(b)-\chi_j(b).
\]
If $\chi_j(b)>\chi_k(b)$, this derivative is negative. Since $\Gamma_{r,j}(b)/\Gamma_{r,k}(b)>0$, the ratio itself is strictly decreasing in $b$.
\end{proof}

\subsection{Opportunity access shock}
\label{app:model_shock}
Fix a policy environment $(r,b)$, and suppose off-chain discovery remains active. Define
\[
\bar{\lambda}_C(r,b)
:=
\frac{\lambda_O K_C(b)}{\eta_r K_O(b)}.
\]
On-chain discovery is viable in environment $(r,b)$ if and only if $\lambda_C > \bar{\lambda}_C(r,b)$. The threshold is increasing in both policy frictions:
\[
\eta_F<\eta_B
\Rightarrow
\bar{\lambda}_C(F,b)>\bar{\lambda}_C(B,b),
\qquad
\chi_C(b)>\chi_O(b)
\Rightarrow
\frac{\partial \bar{\lambda}_C(r,b)}{\partial b}>0.
\]
Consider a baseline opportunity environment with $\lambda_C^0$ and a shock environment with $\lambda_C^1$. If
\[
\lambda_C^0
\le
\bar{\lambda}_C(r,b)
<
\lambda_C^1,
\]
Then, on-chain discovery is not viable in the baseline environment but becomes viable in the shock environment, holding policy fixed.

\begin{proof}[Proof of \Cref{prop:launch}]
The viability condition for on-chain discovery is
\[
\eta_r\kappa_C(b)>\kappa_O(b).
\]
Using
\[
\kappa_C(b)=\frac{\lambda_C V}{K_C(b)}
\qquad\text{and}\qquad
\kappa_O(b)=\frac{\lambda_O V}{K_O(b)},
\]
this condition becomes
\[
\eta_r\frac{\lambda_C V}{K_C(b)}
>
\frac{\lambda_O V}{K_O(b)}.
\]
Canceling $V$ and rearranging yields
\[
\lambda_C
>
\frac{\lambda_O K_C(b)}{\eta_r K_O(b)}
=
\bar{\lambda}_C(r,b).
\]
This proves the threshold characterization.

Since
\[
\bar{\lambda}_C(r,b)
=
\frac{\lambda_O K_C(b)}{\eta_r K_O(b)},
\]
a lower value of $\eta_r$ raises the threshold. Therefore $\eta_F<\eta_B$ implies
\[
\bar{\lambda}_C(F,b)>\bar{\lambda}_C(B,b).
\]

For the fee-floor comparative static, take logs:
\[
\log \bar{\lambda}_C(r,b)
=
\log \lambda_O
+
\log K_C(b)
-
\log \eta_r
-
\log K_O(b).
\]
Holding $r$ fixed and differentiating with respect to $b$ gives
\[
\frac{\partial}{\partial b}
\log \bar{\lambda}_C(r,b)
=
\frac{K_C'(b)}{K_C(b)}
-
\frac{K_O'(b)}{K_O(b)}
=
\chi_C(b)-\chi_O(b).
\]
Hence, if $\chi_C(b)>\chi_O(b)$, then
\[
\frac{\partial}{\partial b}\log \bar{\lambda}_C(r,b)>0,
\]
which implies
\[
\frac{\partial \bar{\lambda}_C(r,b)}{\partial b}>0.
\]

If
\[
\lambda_C^0
\le
\bar{\lambda}_C(r,b)
<
\lambda_C^1,
\]
then the viability condition fails in the baseline environment and holds in the launch environment. Hence on-chain discovery is not viable at $s=0$ but becomes viable at $s=1$.

Finally, on the coexistence branch,
\[
m_C^{\ast,r}
=
\frac{\eta_r\kappa_C-\kappa_O}{1-\kappa_O/\kappa_C},
\qquad
m_O^{\ast,r}
=
\frac{(1-\eta_r)\kappa_O}{1-\kappa_O/\kappa_C}.
\]
Differentiating with respect to $\kappa_C$ gives
\[
\frac{\partial m_C^{\ast,r}}{\partial \kappa_C}
=
\frac{\eta_r(\kappa_C-\kappa_O)^2+\kappa_O^2(1-\eta_r)}
{(\kappa_C-\kappa_O)^2}
>0,
\qquad
\frac{\partial m_O^{\ast,r}}{\partial \kappa_C}
=
-\frac{(1-\eta_r)\kappa_O^2}{(\kappa_C-\kappa_O)^2}
<0.
\]
Since $\partial \kappa_C/\partial \lambda_C=V/K_C(b)>0$, the comparative statics with respect to $\lambda_C$ follow.

\end{proof}

\section{Trace Classifier: Implementation and Limitations}
\label{app:trace_classifier}

\subsection{Classifier implementation details}
\label{app:classifier_implement}
\begin{algorithm}[H]
\DontPrintSemicolon
\SetAlgoLined
\KwIn{Call trace $T$ of a sampled successful arbitrage transaction}
\KwOut{Transaction architecture label $\ell$}

Flatten $T$ into an ordered call list and partition non-root calls into depth-1
subtrees, interpreted as route-attempt subtrees\;
Construct the bot ecosystem from the root contract, delegatecall targets, and
bot-controlled transfer recipients\;

\ForEach{route-attempt subtree}{
    Detect direct-entry swap subtrees, where execution begins without prior venue reads\;
    Otherwise split the subtree at its first successful swap-like venue call\;
    Record pre-swap pool-state reads, \texttt{balanceOf} pool targets, revert-quote calls,
    and venues used in successful execution\;
    Identify helper contracts whose internal calls reach venues later used in execution\;
}

Let $\mathcal{R}$ be direct pool-state-read venues and balance-read targets before the first successful swap\;
Let $\mathcal{Q}$ be pre-swap revert-quoted venues\;
Let $\mathcal{H}$ be helper-resolved venues whose internal calls reach executed venues\;

Remove from broad-scan evidence bot-internal addresses, helper-resolved venues,
and revert-quoted venues\;
If the executed route uses a singleton pool manager, treat \texttt{balanceOf}
targets from the successful route attempt as route-local because they need not
coincide with the PoolManager address\;
Let $\mathcal{X}$ be venues in $\mathcal{R}$ that are not executed, helper-resolved, revert-quoted, or route-local singleton balance targets\;

\uIf{$\mathcal{R}=\emptyset$ and $\mathcal{Q}=\emptyset$}{
    $\ell\leftarrow\texttt{OFFCHAIN\_DIRECT}$\tcp*{no meaningful pre-swap state read}
}
\uElseIf{all pre-swap evidence is helper-resolved, route-local singleton evidence,
or revert-quoted on the executed route}{
    $\ell\leftarrow\texttt{OFFCHAIN\_DIRECT}$\tcp*{on-chain route pricing with off-chain simulation}
}
\uElseIf{$\mathcal{X}\neq\emptyset$}{
    $\ell\leftarrow\texttt{ONCHAIN\_BROAD\_SCAN}$\tcp*{read-based venue expansion}
}
\uElseIf{pre-swap reads are confined to executed venues}{
    $\ell\leftarrow\texttt{ONCHAIN\_EVAL}$\tcp*{route-confined state evaluation}
}
\uElseIf{non-executed venues appear only through revert quotes or helper calls}{
    $\ell\leftarrow\texttt{ONCHAIN\_EVAL}$\tcp*{weak route comparison evidence}
}
\Else{
    $\ell\leftarrow\texttt{OFFCHAIN\_DIRECT}$\tcp*{conservative fallback}
}

\Return{$\ell$}\;

\caption{Implementation of trace-level classifier for successful arbitrages. The algorithm infers search architecture from trace-visible route discovery within arbitrage traces: direct execution, route-confined evaluation, or broad on-chain scanning.}
\label{alg:tx_classifier}
\end{algorithm}

\subsection{Empirical assessment of classifier limitations}
\label{app:classifier_robustness}
In this section, we quantify the empirical importance of the main limitations of our classifier and show that their impact on the final bot-week architecture labels is limited.

First, Uniswap V4-style singleton \texttt{PoolManager} designs can obscure pool identity because multiple logical pools are accessed through a common DEX contract address. This can cause some genuine broad-scan behavior to be absorbed into \texttt{ONCHAIN\_EVAL} rather than \texttt{ONCHAIN\_BROAD\_SCAN}, so the bias is one-sided and would potentially undercount on-chain discovery. 

In our dataset, however, the margin appears limited. Among the 141,751 transactions labeled \texttt{ONCHAIN\_EVAL}, 86.5\% contain no singleton-manager reads. As a stricter diagnostic, we flag \texttt{ONCHAIN\_EVAL} transactions in which the distinct singleton-manager read count (with distinct calldata) exceeds its executed-venue count. Only 2,190 transactions in our sample fall under this category, representing 1.55\% of \texttt{ONCHAIN\_EVAL} transactions and 0.61\% of the full classified sample. At the bot-week level, such transactions appear in only 207 on-chain evaluation bot-weeks or 1.91\% of the full panel. 

In addition, we note that multiple distinct singleton-manager reads do not necessarily imply that the bot is evaluating multiple venues --- the bot could be evaluating the same venue multiple times with different inputs, meaning the numbers above are upper bounds. We therefore interpret singleton-manager opacity as a limited but non-zero source of downward bias in measured on-chain discovery.


Second, weekly bot labels are inferred from a limited sample of up to 50 successful arbitrage transactions, so an off-chain discovery or on-chain discovery bot can be misclassified as on-chain evaluation if the sample misses its defining transaction type entirely. Under a simple model, the miss probability is $(1-p)^n$, where $p$ is the within-week share of the defining type and $n$ is the sample size. Therefore, the empirical question is whether discovery bot-weeks typically have very low defining-type shares. 

We find that it is not the case. Across bot-weeks assigned to off-chain discovery, the median \texttt{OFFCHAIN\_DIRECT} transaction share is 1.00, and the 25th percentile is also 1.00; across bot-weeks assigned to on-chain discovery, the median \texttt{ONCHAIN\_BROAD\_SCAN} transaction share is 0.96. Sampling uncertainty is therefore concentrated in a thin tail of borderline cases. Among bot-weeks for which the classifier sampled the maximum 50 transactions, only 112 off-chain-discovery bot-weeks and 24 on-chain-discovery bot-weeks contain exactly one defining transaction in the 50-transaction sample. 
The plug-in calculation implies about 81 potentially missed on-chain discovery or off-chain discovery labels. The relevant high-activity comparison set contains 4,472 bot-weeks assigned to either off-chain or on-chain discovery with 50 sampled transactions, so the estimated miss rate is 1.81\% within that set. Relative to the full panel of 10,828 bot-weeks, the same 81 cases equal 0.75\%.
This suggests that the main empirical results are driven by confidently labeled bot architectures rather than by marginal bot-weeks near the classification threshold.

\section{Additional Empirical Results}
\label{app:additional_empirics}




\subsection{Additional results on Flashblocks}
\label{appendix:flashblocks_result}

\parhead{Modest changes outside on-chain discovery.}
\Cref{tab:flashblocks_offchain_architecture} reports bot-level architecture metrics and transaction-level footprint measures for off-chain discovery and on-chain evaluation bots around Flashblocks. In contrast to the sharp selection within on-chain discovery, both groups exhibit only modest changes in observable architecture.

\begin{table}[H]
\centering
\resizebox{\linewidth}{!}{%
\begin{tabular}{@{}lrrrrrrr@{}}
\toprule
& & \multicolumn{3}{c}{Bot-level architecture}
& \multicolumn{3}{c}{Transaction-level footprint} \\
\cmidrule(lr){3-5}\cmidrule(l){6-8}
Bot label & Bots
& \makecell{Median \\ calldata length}
& \makecell{Average venue \\ commitment}
& \makecell{Median venue \\ commitment}
& \makecell{Median \\ pre-reads}
& \makecell{Median \\ arbitrage tx gas}
& \makecell{Median \\ spam tx gas} \\
\midrule
Off-chain disc. (pre $\rightarrow$ post)
& 134 $\rightarrow$ 207
& 415 $\rightarrow$ 356
& 0.736 $\rightarrow$ 0.622
& 0.999 $\rightarrow$ 0.860
& 0 $\rightarrow$ 0
& 351K $\rightarrow$ 319K
& 265K $\rightarrow$ 275K \\
On-chain eval. (pre $\rightarrow$ post)
& 86 $\rightarrow$ 113
& 190 $\rightarrow$ 193
& 0.730 $\rightarrow$ 0.736
& 1.000 $\rightarrow$ 1.000
& 3 $\rightarrow$ 4
& 293K $\rightarrow$ 293K
& 43.5K $\rightarrow$ 49.6K \\
\bottomrule
\end{tabular}
}
\caption{Architecture metrics for off-chain discovery and on-chain evaluation bots around Flashblocks. Bot-level architecture metrics first summarize each bot over its labeled bot-weeks within the relevant window then summarize across bots. Transaction-level footprint metrics use sampled arbitrage traces for pre-reads, full successful arbitrage transactions for arbitrage gas, and bot-week spam medians for spam gas.}
\label{tab:flashblocks_offchain_architecture}
\end{table}

\parhead{Flashblocks Window Robustness}
The main Flashblocks event study uses a symmetric four-week window around the July 7, 2025 activation, excluding the event week. This window gives enough support to study entry, exit, and same-label survivor behavior, but it also raises the concern that the result may be influenced by the broader pre-event decline or by later post-event weeks. We therefore repeat the analysis using a narrower two-week window before and after Flashblocks. 

\begin{table}[H]
\centering
\resizebox{\linewidth}{!}{%
\begin{tabular}{lcc}
\toprule
Metric & 4-week main window & 2-week robustness window \\
\midrule
Window dates & Jun. 8--Jul. 5 $\rightarrow$ Jul. 13--Aug. 9 & Jun. 22--Jul. 5 $\rightarrow$ Jul. 13--Jul. 26 \\
Active bots & 983 $\rightarrow$ 462 & 547 $\rightarrow$ 314 \\
Average weekly arbitrage txs & 581{,}685 $\rightarrow$ 487{,}262 & 508{,}700 $\rightarrow$ 533{,}845 \\
On-chain discovery bots & 782 $\rightarrow$ 157 & 387 $\rightarrow$ 112 \\
On-chain discovery bot share & 63.1\% $\rightarrow$ 25.1\% & 60.7\% $\rightarrow$ 28.4\% \\
Off-chain discovery bot share & 23.4\% $\rightarrow$ 46.7\% & 25.5\% $\rightarrow$ 41.9\% \\
On-chain eval bot share & 13.4\% $\rightarrow$ 28.2\% & 13.8\% $\rightarrow$ 29.6\% \\
\midrule
Selected-out on-chain discovery bots & 741 of 782 (94.8\%) & 348 of 387 (89.9\%) \\
On-chain discovery entrants & 120 & 77 \\
Same-label on-chain discovery survivors & 37 & 35 \\
\midrule
Selected-out bot-level median calldata length & 36 & 36 \\
Selected-out average bot venue commitment & 0.044 & 0.048 \\
Selected-out median bot venue commitment & 0.000 & 0.000 \\
Selected-out tx-level median pre-reads & 57 & 46 \\
Survivor bot-level median calldata length & 163 $\rightarrow$ 302 & 163 $\rightarrow$ 512 \\
Survivor average bot venue commitment & 0.345 $\rightarrow$ 0.340 & 0.334 $\rightarrow$ 0.329 \\
Survivor median bot venue commitment & 0.000 $\rightarrow$ 0.000 & 0.000 $\rightarrow$ 0.000 \\
Survivor tx-level median pre-reads & 39 $\rightarrow$ 12 & 45 $\rightarrow$ 16 \\
\bottomrule
\end{tabular}
}
\caption{Flashblocks robustness to using a two-week event window. The main window compares June 8--July 5, 2025 to July 13--August 9, 2025; the robustness window compares June 22--July 5, 2025 to July 13--July 26, 2025. The Flashblocks activation week, July 6--July 12, 2025, is excluded in both cases. Calldata length and venue commitment are computed at the bot level; pre-reads are computed at the sampled-transaction level.}
\label{tab:flashblocks_2w_robustness}
\end{table}

\Cref{tab:flashblocks_2w_robustness} presents the comparison. The shorter window preserves the main empirical pattern. Active bots fall from 547 to 314, a 42.6\% decline, while average weekly arbitrage transactions increase slightly from 508{,}700 to 533{,}845. The contraction is again concentrated in on-chain discovery: active on-chain discovery bots fall from 387 to 112, and their active-bot share falls from 60.7\% to 28.4\%. At the same time, off-chain discovery rises from 25.5\% to 41.9\% of active bots, and on-chain evaluation rises from 13.8\% to 29.6\%. Thus, even in the more local window, the decline in on-chain discovery is not mechanically explained by a collapse in realized arbitrage activity.

The selection and adaptation margins also remain in the shorter window. In the two-week window, 348 of 387 pre-window on-chain discovery bots are selected out, compared with 741 of 782 in the four-week window. These selected-out bots continue to look like broad scanners: their bot-level median calldata length is only 36 bytes, their median bot venue commitment is zero, and their transaction-level median pre-read count is 46. Same-label on-chain discovery survivors also become less search-intensive after Flashblocks, with transaction-level median pre-reads falling from 45 to 16. Their bot-level venue commitment does not rise for the typical survivor: the median remains zero, and the average is nearly unchanged at 0.334 to 0.329. Thus, the robust same-bot adaptation signal is lower pre-execution reading, not a broad increase in route commitment. The two-week window therefore supports the same interpretation as the main analysis: Flashblocks primarily operates through selection against search-intensive on-chain discovery bots, while adaptation among surviving on-chain discovery bots is present but secondary.

\subsection{Additional results on the September surge}
\label{appendix:sep_surge_result}

\parhead{Short-lived pattern in transaction intensity.}\Cref{tab:september_entrant_wave} shows that this short-lived pattern is also visible in transaction intensity. Among transient on-chain discovery bots that are both first-seen and last-seen in the surge window, the median arbitrage transaction count is only 29 over weeks where the bot is labeled as on-chain discovery. By contrast, the non-transient on-chain discovery bots have a median of 262.5 surge-window arbitrage transactions and 4,196 full-sample arbitrage transactions. The on-chain evaluation spike has the same entrant-wave flavor, but is even thinner on the transaction count: short-lived on-chain evaluation bots execute only 2 surge-window arbitrage transactions at the median, compared with 682 for the remaining on-chain evaluation bots. The surge episode is therefore bimodal: a small number of serious entrants and a much larger mass of short-lived experiments. 

\begin{table}[H]
\centering
\resizebox{\linewidth}{!}{%
\begin{tabular}{lrrrrr}
\toprule
Bot label
& \makecell{Unique \\ bots}
& \makecell{First seen \\ in window}
& \makecell{Last seen \\ in window}
& \makecell{First and last \\ seen in window}
& \makecell{Median arbs in window \\ transient / non-transient} \\
\midrule
On-chain disc. & 504 & 482 (95.6\%) & 478 (94.8\%) & 464 (92.1\%) & 29 / 262.5 \\
On-chain eval. & 268 & 211 (78.7\%) & 204 (76.1\%) & 192 (71.6\%) & 2 / 682 \\
Off-chain disc. & 220 & 104 (47.3\%) & 110 (50.0\%) & 78 (35.5\%) & 15 / 138.5 \\
\bottomrule
\end{tabular}
}
\caption{Entrant-wave composition by architecture during the September surge window. ``First seen'' and ``last seen'' are defined over all labeled bot-week activity, not only within the architecture label. Transaction counts are successful arbitrages in September bot-weeks per architecture label.}
\label{tab:september_entrant_wave}
\end{table}

\parhead{November rebound.}
The later November rebound has a different structure from the September AVNT/MIRROR surge. We define the rebound as the five weekly windows beginning October 26 through November 23, 2025, when on-chain discovery spam averages 38.4M transactions per week. The episode coincides with sustained x402-ecosystem volatility on Base \cite{x402}: PING, VIRTUAL, AIXBT, CLANKER, and AVNT rally sharply around October 24, followed by an x402-related correction on November 11; SoSoValue's SoDEX launches on October 28; AVNT remains active; and bridged or newly deployed tokens such as bAP3X enter Base liquidity venues. 

The resulting opportunity set is distributed across many token and venue pairs rather than centered on a single launch. Consistent with this broader structure, off-chain discovery captures 57.1\% of November volume, while on-chain discovery captures 25.9\% despite producing 97.1\% of labeled spam. Thus, the November rebound is not a pure on-chain discovery entrant swarm. It is a high-opportunity period in which route-committed bots capture much of the value, while on-chain discovery produces most of the failed search externality.

Compared with the September surge, the November rebound is less of a transient entrant wave and more of a persistent-bot episode. In November, on-chain discovery contains 179 active bots, but the 48 non-transient bots generate most of the activity: 115.2M spam transactions versus 76.7M from 131 transient bots, 74.0\% of on-chain discovery arbitrage transactions, and 57.1\% of on-chain discovery volume. Across all labels, bots already observed before the rebound produce 55.2\% of labeled November volume, while non-transient bots, including incumbents and newly entering persistent bots, produce 76.0\%. 

The route evidence is consistent with this interpretation. Major on-chain discovery routes span USDC/TRUST, SOSO/USDC/ETH, bAP3X/AERO, PING/USDC, and VIRTUAL/ETH. Thus, November shows that spam can rebound after Flashblocks when a broad opportunity portfolio raises the value of search, but the rebound is more incumbent, concentrated, and off-chain-heavy than a narrow token-launch swarm.


\subsection{Additional results on fee floor escalation}
\label{appendix:fee_result}

\parhead{Value-per-attempt comparison.}
\Cref{tab:fee_step_value_per_attempt} reports the aggregate USD volume per attempt for on-chain discovery cohorts around each minimum-base-fee increase. At 200K and 5M steps, surviving bots are capturing more value per attempt in the post-step window than in the pre-step window, consistent with within-bot pruning of low-yield attempts or shifts toward higher-value opportunities. The 500K step moves in the opposite direction (0.50 $\rightarrow$ 0.39), but a market-wide contraction confounds this comparison in late December --- total arbitrage transactions fall from 512K to 332K, and total volume falls from \$66M to \$36M --- so we treat it as confounded.
We note that the entrant rows do not necessarily contradict the selection mechanism. At 5M, entrants have lower aggregate USD per attempt than surviving incumbents (0.65 versus 1.31). These bots are observed only in the post-step week, so they may include exploratory entry or short-lived responses to local opportunities rather than stable post-fee-floor participants. The selection mechanism is therefore more directly identified in survivor adjustment and subsequent persistence than in one-week entry alone.

\begin{table}[t]
\centering
\begin{tabular}{llrr}
\toprule
Fee step & Cohort & Bots & \makecell{Aggregate \\ USD volume / attempt} \\
\midrule
200K & Selected out (pre-only) & 27 & 0.27 \\
200K & Survivors (pre $\rightarrow$ post) & 15 & 0.36 $\rightarrow$ 0.87 \\
200K & Entrants (post-only) & 22 & 0.44 \\
\midrule
500K & Selected out (pre-only) & 18 & 0.76 \\
500K & Survivors (pre $\rightarrow$ post) & 19 & 0.50 $\rightarrow$ 0.39 \\
500K & Entrants (post-only) & 9 & 1.32 \\
\midrule
1M & Selected out (pre-only) & 9 & 1.42 \\
1M & Survivors (pre $\rightarrow$ post) & 20 & 0.69 $\rightarrow$ 2.68 \\
1M & Entrants (post-only) & 10 & 60.73 \\
\midrule
2M & Selected out (pre-only) & 13 & 91.0 \\
2M & Survivors (pre $\rightarrow$ post) & 17 & 1.98 $\rightarrow$ 0.32 \\
2M & Entrants (post-only) & 16 & 2.45 \\
\midrule
5M & Selected out (pre-only) & 18 & 1.61 \\
5M & Survivors (pre $\rightarrow$ post) & 15 & 0.46 $\rightarrow$ 1.31 \\
5M & Entrants (post-only) & 20 & 0.65 \\
\bottomrule
\end{tabular}
\caption{Aggregate USD per attempt for on-chain discovery cohorts around each minimum-base-fee increase. Each step uses a one-week pre-window and a one-week post-window. USD volume per attempt is computed as arbitrage volume divided by total attempts, summed within each cohort before taking the ratio.}
\label{tab:fee_step_value_per_attempt}
\end{table}

\parhead{Mid-ramp opportunity shock.}
The 1M and 2M fee steps overlap a smaller opportunity-driven on-chain discovery episode from late January to mid-February (cf. \Cref{fig:arb_spam_timeseries}). During this period, active on-chain discovery bots rise from a baseline of roughly 25 to 47 at the peak, and several memecoin token families enter Base arbitrage paths, including ROLL, MOLT, CLAWNCH, CLAWD, and KellyClaude, with 90\% of their full-sample arbitrage activity occurring in this short window.

This shock explains why the 1M and 2M fee-step cohort metrics do not cleanly identify fee-floor selection in \Cref{tab:fee_step_value_per_attempt}. At the 1M step, the post-step window coincides with the shock peak. Entrants in this window have an aggregate USD per attempt of 60.73, consistent with bots entering to capture newly available high-value token opportunities. Surviving bots' value per attempt also rises sharply, from 0.69 to 2.68, as incumbents capture part of the same shock flow. At the 2M step, the comparison reverses mechanically: the pre-step window is still near the shock peak, while the post-step window occurs after the shock has subsided. The cohort selected out at 2M has an aggregate USD per attempt of 91.0 because it is measured under shock conditions, not because selected-out bots are generally high-quality steady-state operators. Surviving bots' value per attempt then falls from 1.98 to 0.32 as the opportunity dissipates. We therefore interpret the 1M and 2M rows as tracing a temporary opportunity cycle, rather than as clean fee-floor treatment effects.

The episode resembles the September surge qualitatively but is much smaller and more incumbent-dominated. It contains 64 unique on-chain discovery bots, and only 31 are first-and-last-seen entrants. Activity is also more concentrated: the top 10 on-chain discovery bots capture 84.5\% of arbitrage transactions, while new entrants contribute only 21.3\%. This contrast is consistent with a higher fee-floor viability threshold. Under the zero-fee-floor September regime, marginal trier bots could cheaply enter around short-lived token-launch opportunities. Under the January--February 1M--2M WEI fee floor, each speculative attempt is more expensive, so the revival is dominated by established high-volume operators rather than a broad entrant swarm.


\subsection{Additional results on chain performance}
\label{appendix:protocol_result}

\parhead{Priority fee results.}
\Cref{fig:prio_fee_spam_dist} presents the distribution of the average priority fee per spam transaction within a bot-week. Off-chain discovery bots have the highest full-sample average priority fee per spam transaction, at \SI{1.06e-6} ETH, compared with \SI{5.66e-7} ETH for on-chain evaluation and \SI{5.29e-7} ETH for on-chain discovery, indicating that many failed attempts are economically meaningful arbitrage attempts rather than arbitrary background noise. After Flashblocks, the on-chain discovery spam-priority distribution also shifts closer to the other architectures, consistent with selection away from the lowest-willingness-to-pay search. 

\Cref{fig:prio_fee_week26_vs_week39} presents the priority fee distribution of successful arbitrages in the pre-ramp endpoint week and post-ramp endpoint week. The median priority fee of on-chain discovery arbitrages rises from \SI{4.58e-7} ETH in the pre-ramp endpoint week to \SI{2.69e-6} ETH in the post-ramp endpoint week. Because we observe only one post-ramp week, we interpret this as an equilibrium outcome combining selection, opportunity value, and bidding behavior, not as a mechanical effect of the fee change.
\begin{figure}[H]
    \centering
    \includegraphics[width=1\linewidth]{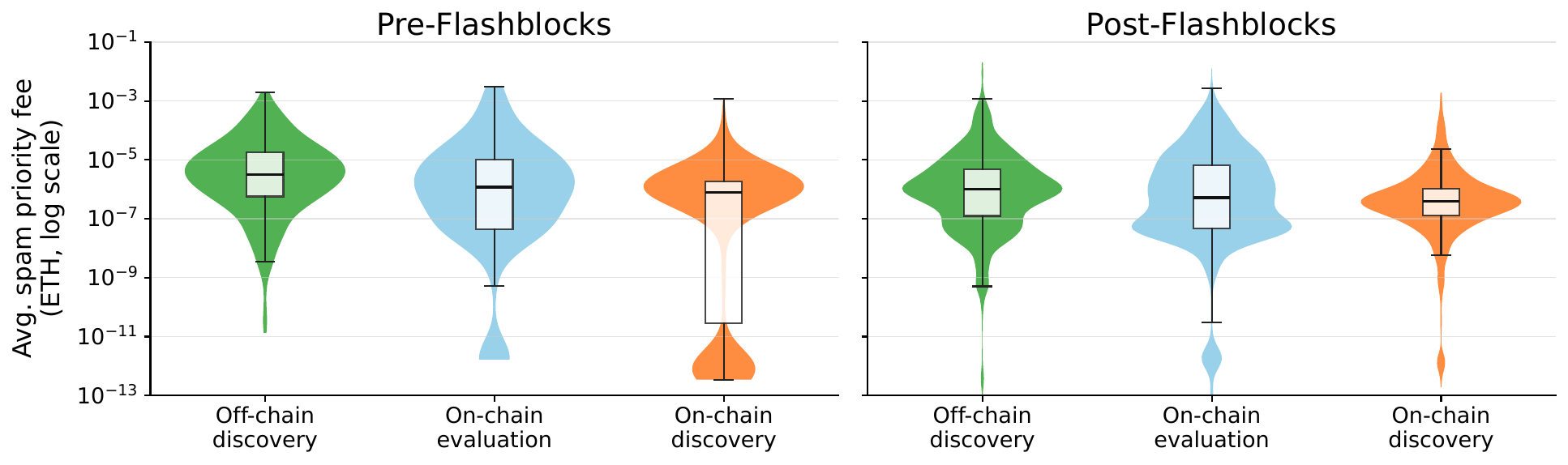}
    \caption{Violin plot of average priority fee per spam tx within a bot-week pre- and post-Flashblocks by architecture.}
    \label{fig:prio_fee_spam_dist}
\end{figure}

\begin{figure}[H]
    \centering
    \includegraphics[width=1\linewidth]{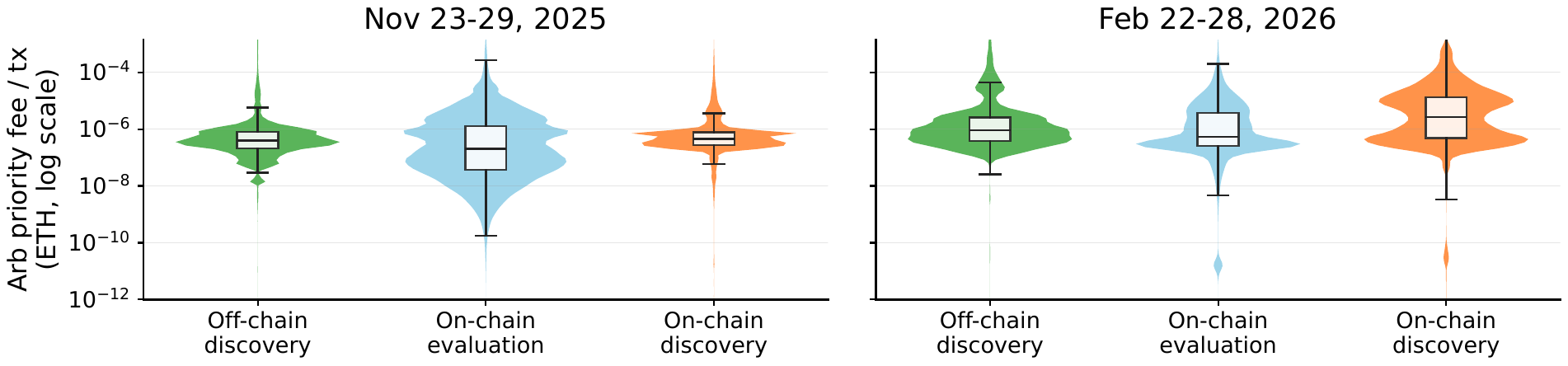}
    \caption{Average priority fee distributions for successful arbitrage transactions by architecture in the pre- and post-ramp endpoint weeks.}
    \label{fig:prio_fee_week26_vs_week39}
\end{figure}

\parhead{Blockspace results.}
As shown in \Cref{sec:flashblocks}, Flashblocks does not immediately reduce aggregate spam or spam gas in the four-week post-window because surviving and entering on-chain discovery bots become more attempt-intensive. The reduction appears immediately afterward. In the week beginning August 10, 2025, spam transactions fall by only 25\% relative to the four-week post-window average, from roughly 20M to 15.38M, but spam gas falls nearly 50\%, from 4.6T to 2.4T, and spam gas share falls from 30.6\% to 16.1\%.

The break is not simply a broad decline in the number of on-chain discovery bots: the number of active on-chain discovery bot-weeks falls only from 37 to 34. Instead, several extremely intensive bots disappear or downscale. Four on-chain discovery bots alone account for a large share of the drop from the week beginning August 3 to the week beginning August 10, 2025: \href{https://basescan.org/address/0xfbea32bcde4aa517680e4eb2bfad4a37d0eb2235}{\texttt{0xfbea32...eb2235}} falls from 0.896T spam gas to 0, \href{https://basescan.org/address/0xAa92f3EE3dd31aF92F71ce51aB9028CFC6B4d47c}{\texttt{0xaa92f3...b4d47c}} falls from 0.853T to 0, \href{https://basescan.org/address/0x6bab41136c65bd672f8adba457644a5ff7c0b196}{\texttt{0x6bab41...0b196}} falls from 0.811T to 0.204T, and \href{https://basescan.org/address/0x9a3A2F6DC28fa9AFC8E03B155d3585c79A623940}{\texttt{0x9a3a2f...23940}} falls from 0.312T to 0. Combined with the selection effect of Flashblocks against high scan footprint transactions, this combination explains why aggregate spam gas falls sharply even though spam transaction counts remain large.


\end{document}